\DeclarePairedDelimiter\ket{\lvert}{\rangle}
\DeclarePairedDelimiterX\braket[2]{\langle}{\rangle}{#1\,\delimsize\vert\,\mathopen{}#2}
\newtheorem{theorem}{Theorem}[section]
\newtheorem{remark}{Remark}[section]
\newtheorem{lemma}[theorem]{Lemma}
\theoremstyle{definition}
\newtheorem{conjecture}[theorem]{Conjecture}
\theoremstyle{remark}
\numberwithin{equation}{section}
\begin{document}
\setcounter{page}{1}

\noindent {\small }\hfill     {\small  }\\
{\small }\hfill  {\small }

\centerline{}

\centerline{}

\title[ ]{Aaronson-Ambainis Conjecture Is True For Random Restrictions}

\author[Sreejata Kishor Bhattacharya]{Sreejata Kishor Bhattacharya
 }

\address{School of Technology and Computer Science, Tata Institute of Fundamental Research, Mumbai
}
\email{\textcolor[rgb]{0.00,0.00,0.84}{sreejata.bhattacharya@tifr.res.in}}


\newcommand{\cube}{ \{ \pm 1\} }
\newcommand{ \infl}{ \mathsf{Inf}}
\newcommand{ \prob} {\mathsf{Pr}}
\newcommand{ \var} {\mathsf{Var}}
\newcommand{\E}{\mathsf{E}}
\newcommand {\poly}{\mathsf{poly}}
\begin{abstract}

In an attempt to show that the acceptance probability of a quantum query algorithm making $q$ queries can be well-approximated almost everywhere by a classical decision tree of depth $\leq \poly(q)$, Aaronson and Ambainis proposed the following conjecture: let $f: \cube^n \rightarrow [0,1]$ be a degree $d$ polynomial with variance $\geq \epsilon$. Then, there exists a coordinate of $f$ with influence $\geq \text{poly} (\epsilon, 1/d)$. \newline

We show that for any polynomial $f: \cube^n \rightarrow [0,1]$ of degree $d$ $(d \geq 2)$ and variance $\var[f] \geq  1/d$, if $\rho$ denotes a random restriction with survival probability $\dfrac{\log(d)}{C_1 d}$,
$$ \prob \left[f_{\rho} \text{ has a coordinate with influence} \geq \dfrac{\var[f]^2 }{d^{C_2}} \right] \geq \dfrac{\var[f] \log(d)}{50C_1 d}$$
where $C_1, C_2>0$ are universal constants. Thus, Aaronson-Ambainis conjecture is true for a non-negligible fraction of random restrictions of the given polynomial assuming its variance is not too low. \newline 
\end{abstract} \maketitle

\section{Introduction}

\noindent One of the central open problems in the field of quantum query complexity is finding if there exists a partial function which is defined on a large fraction of the Boolean hypercube (say, constant) but whose quantum query complexity and classical query complexity are super-polynomially separated. The seminal result of Beals, Burhman et al. \cite{BBCMdW98} shows that no such separation is possible when the function is defined on the entire hypercube. On the other hand, functions for which we know such a separation (e.g. -  Forrelation \cite{AA14a} , Bernstein-Vazirani \cite{BV97}) are defined on an exponentially small fraction of the hypercube. A possible explanation as to why all known functions exhibiting large gaps between quantum and classical query complexity have very small support size would be the following folklore conjecture:

\begin{conjecture}
\label{conj:simulatequantum}
    Let $Q$ be a quantum query algorithm with Boolean output on $n$ qubits making $q$ queries. Let $P: \cube^n \rightarrow [0,1]$ be given by $P(x) = \text{Pr}[Q \text{ outputs 1 on }x]$. For any $\epsilon > 0$, there exists a classical query algorithm $A$ such that $\E [(A(x)-Q(x))^2] \leq \epsilon $ and $A$ makes at most $\poly\left(q, \dfrac{1}{\epsilon} \right)$ queries.
\end{conjecture}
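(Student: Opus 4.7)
The plan is to follow the Aaronson--Ambainis reduction: Conjecture~\ref{conj:simulatequantum} is implied by the (still open) influence conjecture for bounded polynomials. By the polynomial method of Beals, Buhrman, Cleve, Mosca, and de Wolf, the acceptance probability $P(x)=\prob[Q \text{ outputs } 1 \mid x]$ of a $q$-query quantum algorithm is a multilinear real polynomial on $\cube^n$ of degree at most $2q$, taking values in $[0,1]$. So it suffices to show the following statement about polynomials: every $P:\cube^n\to[0,1]$ of degree $d$ admits a classical decision tree $T$ of depth $\poly(d,1/\epsilon)$ with $\E[(T-P)^2]\leq \epsilon$. Applying this to the acceptance polynomial with $d=2q$ yields the conjecture.

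I would construct $T$ greedily, top-down. At an internal node carrying a partial restriction $\pi$, if $\var[P_\pi]\leq \epsilon/2$, terminate and label the leaf by the constant $\E[P_\pi]$; by the law of total variance the final mean-squared error is upper bounded by the average leaf variance, which is then $\leq \epsilon/2$. Otherwise, invoke the Aaronson--Ambainis influence conjecture on $P_\pi$ (still of degree $d$, taking values in $[0,1]$, with variance $\geq \epsilon/2$) to obtain a coordinate $i$ of influence at least $\alpha := \poly(\epsilon, 1/d)$, query it, and recurse on the children $\pi\cup(i\mapsto\pm1)$.

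To bound the depth, I would use the progress measure $\Phi(P_\pi):=\sum_S \hat{P}_\pi(S)^2\,|S|$. A direct Fourier calculation gives
$$\E_{x_i}\bigl[\Phi(P_\pi|_{x_i})\bigr]=\Phi(P_\pi)-\infl_i[P_\pi],$$
so querying a coordinate of influence $\geq\alpha$ shrinks $\Phi$ by at least $\alpha$ in expectation across the two children. Since $\Phi(P)\leq d\cdot\E[P^2]\leq d$ at the root, the expected root-to-leaf length is at most $d/\alpha=\poly(d,1/\epsilon)$; a Markov/truncation step converts this to a worst-case depth bound at the cost of an extra $\poly(1/\epsilon)$ factor.

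The hard part is the influence-finding step itself. The construction demands that an influential coordinate exist at \emph{every} restriction $\pi$ encountered in the tree, whereas the main theorem of this paper only supplies one for an $\Omega(\var[P]\log d/d)$ fraction of random restrictions at a single carefully chosen survival probability. Closing this gap---for instance by iterating the random-restriction step with a hybrid argument along root-to-leaf paths, by rerandomizing the surviving coordinates between levels of $T$, or by showing that the ``bad'' restrictions must already be near-constant (so they can be converted to leaves without appreciable cost)---is the principal obstacle to upgrading the random-restriction theorem stated in the abstract to an unconditional resolution of Conjecture~\ref{conj:simulatequantum}.
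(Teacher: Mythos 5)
The statement you are asked to prove is Conjecture~\ref{conj:simulatequantum}, which is an \emph{open conjecture}: the paper never proves it, and in fact the paper's entire contribution is a partial result in its direction (Theorem~\ref{theorem:aa_for_restrictions}, the random-restriction version of Conjecture~\ref{conj: AA-conj}). Your proposal is the standard Aaronson--Ambainis reduction, and as a reduction it is sound: the polynomial method gives $\deg P \leq 2q$, the greedy ``query the most influential coordinate, stop when the variance is small'' tree controls the error by the law of total variance, and your progress-measure computation $\E_{x_i}[\Phi(P_\pi|_{x_i})]=\Phi(P_\pi)-\infl_i[P_\pi]$ together with $\Phi(P)\leq d$ and a truncation step does give depth $\poly(d,1/\epsilon)$ \emph{provided} every restriction encountered with variance $\geq \epsilon/2$ has a coordinate of influence $\geq \poly(\epsilon,1/d)$. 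This is exactly the argument sketched in the paper's introduction as the motivation for Conjecture~\ref{conj: AA-conj}, via \cite{AA}.

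The genuine gap is the one you yourself flag in your last paragraph, and it is not closable with the tools in this paper: the influence-finding step \emph{is} Conjecture~\ref{conj: AA-conj}, which remains open. The paper's Theorem~\ref{theorem:aa_for_restrictions} only guarantees an influential coordinate for an $\Omega(\var[f]\log d/d)$ fraction of \emph{random} restrictions at one specific survival probability $\Theta(\log(d)/d)$; the restrictions arising along the root-to-leaf paths of your greedy tree are adaptive and adversarial, not random, and nothing in the paper lets you rerandomize or hybridize them while preserving both the variance lower bound and the error accounting. Your suggested patches (hybrid argument along paths, rerandomizing surviving coordinates, arguing bad restrictions are near-constant) are precisely the kind of missing ideas the paper's conclusion section speculates about but does not supply. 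So what you have written is a correct conditional reduction of Conjecture~\ref{conj:simulatequantum} to Conjecture~\ref{conj: AA-conj} (equivalently, via \cite{tree}, to Conjecture~\ref{conj:approxpoly}), not a proof of the statement; no unconditional proof should be expected here, since the statement is presented in the paper as an open problem.
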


It is known that if $Q$ makes at most $q$ queries, then $P$ is given by a polynomial of degree at most $2q$. Although $P$ has more structure than \text{any} arbitrary low degree bounded polynomial, it is further conjectured that such structure is not necessary. In other words, we forget the fact that $P$ arises from a quantum query algorithm and instead try to construct a classical query algorithm for \textit{any} bounded low-degree polynomial. This led to the following conjecture (also folklore).

\begin{conjecture}
\label{conj:approxpoly}
    Let $P: \cube^n \rightarrow [0,1]$ be a degree $d$ polynomial. For any $\epsilon > 0$, there exists a classical decision tree $T$ of depth at most $\poly(d, 1/\epsilon)$ such that $\E [(P(x) - T(x))^2] \leq \epsilon$.
\end{conjecture}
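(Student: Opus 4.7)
\noindent The plan is to apply a $p$-random restriction with survival probability $p := \log(d)/(C_1 d)$ and show that, with probability at least $pV$ (writing $V := \var[f]$), two events hold simultaneously for $f_\rho$: (i) $\var[f_\rho] \geq pV/2$, and (ii) the Fourier mass of $f_\rho$ above level $k := O(\log d)$ is negligible. Together these conditions reduce the problem to extracting an influential coordinate from an essentially degree-$O(\log d)$ bounded polynomial, which is accessible via existing Aaronson--Ambainis-type bounds whose exponential dependence on the degree becomes merely polynomial in $d$ after the logarithmic reduction.

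For (i), a direct Fourier computation shows
$$ \E_\rho\bigl[\var[f_\rho]\bigr] \;=\; \sum_{S \neq \emptyset}\bigl(1 - (1-p)^{|S|}\bigr)\hat{f}(S)^2 \;\geq\; p V, $$
since $1 - (1-p)^{|S|} \geq p$ for every nonempty $S$. Because $\var[f_\rho] \in [0, 1/4]$ always, a reverse Markov inequality yields $\prob_\rho[\var[f_\rho] \geq pV/2] \geq 2pV$. For (ii), observe that for any fixed Fourier set $S$ of $f$ (with $|S| \leq d$), its intersection with the surviving coordinate set $J$ is $\mathrm{Bin}(|S|, p)$-distributed with mean at most $pd = \log(d)/C_1$; a Chernoff bound gives $\prob[|S \cap J| > k] \leq d^{-100}$ for a sufficiently large $k = O(\log d)$. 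Combining with the identity
$$ \E_\rho \sum_{|T| > k,\, T \subseteq J} \hat{f}_\rho(T)^2 \;=\; \sum_S \hat{f}(S)^2\, \prob\bigl[|S \cap J| > k\bigr], $$
the Fourier tail of $f_\rho$ above level $k$ is at most $d^{-100}$ in expectation, hence at most $d^{-50}$ with probability $\geq 1 - d^{-50}$ by Markov. A union bound gives both (i) and (ii) with probability at least $2pV - d^{-50} \geq pV$, using the hypothesis $V \geq 1/d$.

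For any $\rho$ in this good event, set $g := f_\rho^{\leq k}$; then $\deg(g) \leq k = O(\log d)$, $\var[g] \geq pV/3$, and $\max_i \infl_i(f_\rho) \geq \max_i \infl_i(g)$ by Parseval. Invoking the known Aaronson--Ambainis lower bound for low-degree bounded polynomials --- which gives max influence $\gtrsim (V')^{O(1)}/c^{\deg}$ for a bounded polynomial of degree $\deg$ with variance $V'$ --- and plugging in $\deg = O(\log d)$ (so that $c^{\deg} = d^{O(1)}$) together with $V' \geq \Omega(V \log(d)/d)$ yields $\max_i \infl_i(f_\rho) \geq V^2/d^{C_2}$ for an absolute constant $C_2$. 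The principal obstacle is this last step: the truncation $g$ need not lie in $[0,1]$, so one must first control $\|g\|_\infty \leq d^{O(1)}$ --- for instance by bounding $\|f_\rho^{>k}\|_\infty$ via hypercontractivity on the bounded polynomial $f_\rho$ --- and then rescale before invoking AA, with the resulting rescaling factor absorbed into the exponent $C_2$.
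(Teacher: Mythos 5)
There is a fundamental mismatch between what you prove and what was asked. The statement in question is Conjecture \ref{conj:approxpoly}: for \emph{every} bounded degree-$d$ polynomial $P$ there is a depth-$\poly(d,1/\epsilon)$ decision tree approximating $P$ in $L^2$. This is equivalent (via the observations of Aaronson--Ambainis and O'Donnell et al.) to the full Aaronson--Ambainis conjecture and is open; the paper does not prove it. Your argument instead targets the paper's actual main result, Theorem \ref{theorem:aa_for_restrictions}: with probability roughly $\var[f]\log(d)/d$ over a random restriction $\rho$ with survival probability $\log(d)/(C_1 d)$, the restricted function $f_\rho$ has a coordinate of influence $\geq \var[f]^2/d^{C_2}$. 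That is a strictly weaker statement: it produces an influential coordinate only for a small fraction of restrictions, and it does not yield (even by iteration, as in the Aaronson--Ambainis query strategy, which needs an influential coordinate for the function itself and for all of its subsequent restrictions) a decision tree of depth $\poly(d,1/\epsilon)$ approximating $P$ everywhere. So even a fully rigorous version of your proposal would not establish the stated conjecture.

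Within your proposal, the step you yourself flag as ``the principal obstacle'' is not a technicality that can be absorbed into constants --- it is exactly the hole that the paper's core technical work exists to fill, and your suggested fix fails. The truncation $g=f_\rho^{\leq k}$ is not pointwise bounded, and hypercontractivity cannot give $\|g\|_\infty \leq d^{O(1)}$ (or control $\|f_\rho^{>k}\|_\infty$): hypercontractive inequalities bound $L^q$ norms of low-degree functions for finite $q$, not sup norms, and low-degree truncations of $[0,1]$-valued functions need not be bounded by anything polynomial, so the ``rescale and invoke AA'' route collapses. The alternative of invoking the Dinur--Friedgut--Kindler--O'Donnell machinery directly on $f_\rho$ also does not work as stated: their junta theorem needs a Fourier tail of order $\exp(-O(k^2\log k/\epsilon))$, and (as the paper explains in its proof overview) forcing that with a random restriction requires a survival probability so small that $\var[f_\rho]$ itself becomes negligible. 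The paper's way around this is Theorem \ref{theorem:main_result} / Theorem \ref{theorem:improved_tail_bound_final}: rather than bounding $\|f_\rho^{\leq k}\|_\infty$, it proves directly that an $L^2$-bounded degree-$k$ function far from juntas must be $L^2$-far from every \emph{bounded degree-$d$} function, using the block-sensitivity bound $\mathrm{bs}(g)\leq 6d^2$ of Beals et al.\ together with a balanced partition of the small linear Fourier coefficients and the anticoncentration/noise lemmas; this improves the tail requirement from $\exp(-O(k^2\log k))$ to $d^{-O(1)}C^{-k}$, which is what makes the parameters close at survival probability $\log(d)/(Cd)$. Your surrounding computations (the variance lower bound via Lemma \ref{lemma:variance_of_random_restriction} and reverse Markov, and the Chernoff tail bound as in Lemma \ref{lemma:bounded_tail}) do mirror the paper, but without a substitute for Theorem \ref{theorem:main_result} the argument has a genuine gap --- and, again, even with it, it would prove Theorem \ref{theorem:aa_for_restrictions}, not Conjecture \ref{conj:approxpoly}.
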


Aaronson and Ambainis \cite{AA14b} proposed the following query algorithm to estimate $P$: suppose the variance of the function is sufficiently small. Then we terminate the query algorithm and output the average over the unqueried coordinates. If not, we query the coordinate with the highest \textit{influence} and restrict the function according to the response received. We keep doing this until we have made too many queries or the variance has become sufficiently low. In order to show that this algorithm gives an accurate estimate, \cite{AA14b} observed that it 
is sufficient to prove the following conjecture.

\begin{conjecture}
\label{conj: AA-conj}
    \textbf{(Aaronson-Ambainis conjecture)} Let $f: \cube^n \rightarrow [0,1]$ be a degree $d$ polynomial. Then, there exists a coordinate $j$ such that $\infl_j[f] \geq \poly (1/d, \var[f])$
\end{conjecture}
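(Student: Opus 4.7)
The plan is to approach the Aaronson--Ambainis conjecture through a random-restriction-based reduction, splitting the argument into a structural step (producing an influential coordinate after a random restriction) and a lifting step (promoting this to an influential coordinate of $f$ itself).

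For the structural step, I would first establish that a random restriction $\rho$ of $f$ with survival probability $p \approx \log(d)/d$ produces, with non-negligible probability, a function $f_\rho$ possessing an influential surviving coordinate. The two ingredients I would use are: (i) $\E_\rho[\var[f_\rho]] = \var[f]$, together with an anti-concentration estimate (using hypercontractivity of degree-$d$ polynomials valued in $[0,1]$) ensuring $\var[f_\rho] \geq \Omega(\var[f])$ with probability $\Omega(\var[f])$; (ii) on this good event, a level-one Fourier mass lower bound --- obtained by exploiting the boundedness of $f_\rho$ in $[0,1]$ combined with the drop in effective degree produced by the restriction --- yielding $\sum_j \hat{f_\rho}(\{j\})^2 \geq \poly(\var[f], 1/d)$, and hence a surviving coordinate $j$ with $\infl_j[f_\rho] \geq \hat{f_\rho}(\{j\})^2 \geq \poly(\var[f], 1/d)$.

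For the lifting step, I would exploit the identity $\E_\rho[\infl_j[f_\rho] \mid j \text{ survives}] = \infl_j[f]$, which makes the restricted influence an unbiased estimator of the unrestricted one. If $\rho$ produces a surviving coordinate of influence $\geq \tau := \poly(\var[f], 1/d)$ with probability $q$, the natural move is to argue that \emph{some fixed} $j$ must satisfy $\infl_j[f] \geq \poly(\var[f], 1/d)$. One route is a pigeonhole on the ``argmax'' coordinate across restrictions combined with the degree-based bound $\sum_j \infl_j[f] \leq d\cdot \var[f]$: the set $\{j : \infl_j[f] \geq \tau/2\}$ has size at most $2d\var[f]/\tau$, so if the restriction guarantee ever names a coordinate outside this set one could restrict $f$ along that coordinate and iterate on the residual function, eventually forcing a fixed influential coordinate or a contradiction.

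The hard part is almost certainly the lifting. The restriction guarantee is inherently existential (some surviving $j$ has high $\infl_j[f_\rho]$), and the identity of that $j$ depends on $\rho$, so a direct averaging argument can fail catastrophically when different restrictions highlight different coordinates --- the maximum can be much larger than the mean. Overcoming this appears to require either a ``stability of influential coordinates under restriction'' phenomenon (the highest-influence coordinate of $f_\rho$ is, with reasonable probability, an already-influential coordinate of $f$), or a substantially stronger anti-concentration estimate for low-degree bounded polynomials than is currently available. Given that the conjecture remains open in full generality, I would not expect either ingredient to be straightforward, and any serious attempt will likely hinge on a new structural result about bounded low-degree polynomials rather than soft averaging over restrictions.
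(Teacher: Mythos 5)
The statement you are asked about is precisely the Aaronson--Ambainis conjecture, which the paper does not prove --- it is stated as an open conjecture, and the paper's actual theorem is only the weaker assertion that a $\approx \var[f]\log(d)/d$ fraction of random restrictions $f_\rho$ have a coordinate of influence $\geq \var[f]^2/d^{O(1)}$. Your proposal does not close this gap either, and you say so yourself: the ``lifting step'' is exactly the missing ingredient. The identity $\E_\rho[\infl_j[f_\rho] \mid j \text{ survives}] = \infl_j[f]$ is correct, but as you note it only controls averages, while the restriction guarantee is existential and the witnessing coordinate depends on $\rho$; no pigeonhole or iteration scheme you sketch actually converts ``for many $\rho$ some surviving coordinate of $f_\rho$ is influential'' into ``some fixed coordinate of $f$ is influential,'' and the paper's own conclusion section proposes essentially this lifting (via composing $f$ with gadgets so that restrictions of the lifted function look alike) as an open direction, not a result. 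So the proposal is an outline of the paper's theorem plus an acknowledged open problem, not a proof of the conjecture.

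Two further points on the ``structural step,'' which you treat as routine but which is where the paper's real work lies. First, ingredient (i) is misstated: $\E_\rho[\var[f_\rho]] \neq \var[f]$ in general; with survival probability $p$ one only has $\E_\rho[\var[f_\rho]] = \sum_{T \neq \emptyset}\bigl(1-(1-p)^{|T|}\bigr)\hat f(T)^2 \geq p\,\var[f]$, which for $p = \log(d)/(Cd)$ is why the paper's success probability is $\approx \var[f]\log(d)/d$ rather than $\Omega(\var[f])$. Second, ingredient (ii) --- a $\poly(\var[f],1/d)$ \emph{lower} bound on the level-one (or low-level) Fourier mass of $f_\rho$ from ``boundedness plus degree drop'' --- is not a known soft fact; known level-one inequalities for bounded functions go in the opposite direction (upper bounds). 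The paper instead proves a new quantitative strengthening of the Dinur--Friedgut--Kindler--O'Donnell tail/junta theorem for functions that are close to \emph{bounded low-degree} functions, using the Beals et al.\ block-sensitivity bound $\mathrm{bs}(g) \leq 6d^2$ together with an anticoncentration and balanced-partition argument, and then deduces that $f_\rho$ is with high probability an approximate junta on coordinates each of influence $\geq \poly(\var[f],1/d)$. Without something of this strength, your structural step does not go through either.
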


As a side remark, we mention that O'Donnell et al. \cite{OSSS05} had shown previously that functions which can be approximated by decision trees have a coordinate with high influence. So conjectures \ref{conj:approxpoly} and \ref{conj: AA-conj} are equivalent. \newline

Aaronson-Ambainis conjecture has received significant attention in the past few years. A 2006 result of Dinur, Friedgut, Kindler, O'Donnell \cite{DFKO06} shows that the conjecture is true if $\poly(d)$ is replaced by $\exp(d)$. In 2012, Montanaro \cite{Mon12} proved the conjecture in the special case of block-multilinear forms where all coefficients have the same magnitude. In 2016, O'Donnell and Zhao \cite{OZ16} showed that it suffices to prove the conjecture for a special class of polynomials known as \textit{one-block decoupled polynomials}. In 2020, Keller and Klein \cite{KK19} claimed to have found a proof for the conjecture but their paper had a subtle flaw and turned out to be wrong. More recently, Lovett and Zhang \cite{LZ19} initiated a new line of attack using the notions of \textit{fractional block sensitivity} and \textit{fractional certificate complexity}. In 2022, Bansal, Sinha, Wolf \cite{BSdW22} proved that this conjecture is true for \textit{completely bounded block multilinear forms} - a class of polynomials that captures a special kind of quantum query algorithms. \newline 

In this work we show that Aaronson-Ambainis conjecture is true for a large fraction of random restrictions of $f$ assuming $\var[f]$ is not too low. We hope our result gives new insights to the Aaronson-Ambainis conjecture. In particular, this opens up a possible line of attack:
\begin{itemize}
    \item Assuming a supposed counterexample $f: \cube^n \rightarrow [0,1]$, modify it appropriately (e.g., by composing it with some appropriate gadget or applying a low noise operator) to get a function $\tilde{f}: \cube^{\tilde{n}} \rightarrow [0,1]$ such that most of its random restrictions remain a counterexample. Combined with our result, this will prove Aaronson-Ambainis conjecture. This approach is discussed in a bit more detail in the conclusion.
\end{itemize}

Our main result is a new structural restriction about bounded low-degree polynomials over the hypercube. While several structural results are known about low-degree \textit{boolean} functions $f: \cube^n \rightarrow \{0,1\}$, such results are rare for low-degree \textit{bounded} functions $f: \cube^n \rightarrow [0,1]$. We show that if $f: \cube^n \rightarrow [0,1]$ has degree $d$ and $\rho$ is a random restriction with survival probability $O(\log(d)/d),$ then with very high probability $f_{\rho}$ depends essentially on $\approx \poly(d)$ coordinates, even though there are $O(n\log(d)/d)$ alive coordinates on average.

\section{Organization}
We introduce notations and necessary preliminaries in section 3. We give a high level overview of our proof in section 4. In section 5 we compile some lemmas that will be needed in our main proof. Our main results are proven in section 6. Our main technical tool is Theorem \ref{theorem:approx_junta}, which says that most random restrictions of a bounded low-degree function can be approximated by a small junta. In Theorem \ref{theorem:aa_for_restrictions} we prove the result mentioned in the abstract (that Aaronson-Ambainis conjecture is true for a non-negligible fraction of random restrictions).
 
\section{Notations and preliminaries}

\subsection*{Query algorithms}
\begin{enumerate}
    \item A classical query algorithm $A$ (or equivalently, a decision tree) for computing a function $f: \cube^n \rightarrow \mathbb{R}$ can access the input $x \in \cube^n$ by adaptively issuing queries to its bits. We assume internal computations have no cost. The depth of the query algorithm/decision tree is the maximum number of bit queries issued on an input. We say $A$ $\epsilon$-approximates $f$ if $||f-A||_2^2 = \E_{x \in \cube^n} \left[ (f(x)-A(x))^2\right] \leq \epsilon$. \newline

    For a partial function $f: S (\subseteq \cube^n) \rightarrow \{0,1\}$, its classical query complexity $D(f)$ is the smallest $d$ for which there exists a decision tree $T$ of depth $d$ such that $T(x)=f(x)$ for all $x \in S$.
    \item A quantum query algorithm can access the input $x \in \cube^n$ via an oracle $O_x$. The oracle acts on a fixed set of $\lceil \log(n) \rceil $ qubits (which the query algorithm has access to) in the following manner:
    $$ O_x  \ket{j}  = (-1)^{x_j}  \ket{j} \text{ for all  }j \in [n].$$
    The quantum query algorithm applies a sequence of unitary operators, where each operator is either $O_x$ or an input-independent unitary operator $U$. In the end, it measures the first qubit and outputs the measurement result. The number of queries issued is the number of times $O_x$ is applied. \newline

    Notice that a quantum query algorithm $Q$ naturally defines a function $ P: \cube^n \rightarrow [0,1]$:
    $$ P(x)= \Pr [Q \text{ outputs }1 \text{ on input }x]$$
    It is well-known that if $Q$ makes $q$ queries, then $P$ is a degree $2q$ polynomial. \newline

    For a function $f: S (\subseteq \cube^n) \rightarrow \{0,1\}$, we define its quantum query complexity $Q(f)$ to be the smallest $q$ for which there exists a quantum query algorithm $Q$ making $q$ queries such that for all $x \in S$,
    \begin{align*}
        \Pr [Q \text{ outputs }1 \text{ on input }x]  \begin{cases}
            \geq 2/3 & \text{ if }f(x)=1\\
             \leq 1/3 & \text{ if }f(x)=0
        \end{cases}
    \end{align*}
    
\end{enumerate}

\subsection*{Analysis of boolean functions}
In this section we recall some results from analysis of boolean functions. A good reference is O' Donnell's textbook \cite{OD21}. \newline

\begin{enumerate}
    \item Any function $f: \cube^n \rightarrow \mathbb{R}$ has a unique representation as $f(x) = \displaystyle \sum_{S \subseteq [n]} \hat{f}(S) \chi_S(x)$ where $\chi_S(x) = \displaystyle \prod_{i \in S} x_i$. The coefficients $\hat{f}(S)$ are the Fourier coefficients of $f$. The degree of $f$ is $\max \{|S| | \hat{f}(S) \neq 0 \}$. \newline
    \item The variance of $f$ is
    $$ \var_{x \in \cube^{n}} [f(x)] = \displaystyle \sum_{S \neq \phi} \hat{f}(S)^2 $$
    \item For a cooordinate $i$, the influence of the $i$'th coordinate is defined as
    \begin{align*}
        \infl_i[f] & = \E_{x \in \cube^n} \left[  \left(\dfrac{f(x) - f(x^{(i)})}{2} \right)^2\right] \\
        & = \displaystyle \sum_{i \in S} \hat{f}(S)^2
    \end{align*}
    The total influence of $f$ is 
    $$ \infl[f] = \displaystyle \sum_{i \in [n]} \infl_i[f] = \displaystyle \sum_{S} |S| \hat{f}(S)^2 $$
    From the Fourier expansion it is clear that if $\text{deg}(f) \leq d, \infl[f] \leq d \var[f]$
    \item Given two functions $f,g \cube^n \rightarrow \mathbb{R}$, we say $g$ $\epsilon-$approximates $f$ if $||f-g||_2^2 =  \E_{x} \left[ (f(x)-g(x))^2 \right] \leq \epsilon$.

    \item For a point $x \in \cube^n$ and a subset $S \subseteq [n]$ and $-1 \leq \rho \leq 1$, we define a distribution $N_{\rho,S}(x)$ on $\cube^n$ as follows:
    \begin{itemize}
        \item The bits $y_1, y_2, \cdots , y_n$ are independent, and
        \begin{align*}
            \prob [y_i= x_i] = \begin{cases}
                1 & \text{ if }i \not \in S \\
                (1+\rho)/2 & \text{ if } i \in S
            \end{cases}
        \end{align*}
    \end{itemize}
    When $S=[n]$, we abbreviate $N_{\rho, S}(x)$ by $N_{\rho}(x)$. \newline
   \item  For $f: \cube^n \rightarrow \mathbb{R}$ and $-1 \leq \rho \leq 1$ define $T_{\rho}f: \cube^n \rightarrow \mathbb{R}$ by 
    $$ T_{\rho}f(x) = \E_{z \leftarrow N_{\rho}(x)} [f(z)].$$
    It is easy to see that the Fourier expansion of $T_{\rho}f$ is given by
    $$ T_{\rho}f(x) = \displaystyle \sum_{S \subseteq [n]} \rho^{|S|} \hat{f}(S) \chi_S(x).$$

    \item A function $f: \cube^n \rightarrow \mathbb{R}$ is a \textit{junta of arity} $l$ or \textit{$l$-junta} if there exists a subset $S \subseteq [n], |S| \leq l$ such that $f$ only depends on the coordinates in $S$. We say $f$ is a $(\epsilon,l)$ junta if it can be $\epsilon$-approximated by a $l$-junta, i.e., there exists a $l$-junta $g$ such that $||f-g||_2^2 \leq \epsilon$.

    \item A restriction $\rho= (S, y)$ of $f: \cube^n \rightarrow \mathbb{R}$ is specified by a subset $S \subseteq [n]$ and an assignment $y \in \cube^{[n] \setminus S}$. Such a restriction naturally induces a function $f_{\rho}: \cube^{S} \rightarrow \mathbb{R}$. Sometimes we shall write $f_y$ instead of $f_{\rho}$ (note that $S$ is determined by $y$ since $y \in \cube^{[n] \setminus S}$) \newline 
    
    By a random restriction with survival probability $p$, we mean sampling $\rho= (S, y \in \cube^{[n] \setminus S})$ where each coordinate $i \in [n]$ is included in $S$ with probability $p$ independently, and each bit of $y$ is independently set to a uniformly random bit. 
    
\end{enumerate}

\begin{remark}
\label{remark:constant}
    Throughout the paper, all growing parameters (e.g., the degree $d$) will be assumed to be larger than some sufficiently big constant. This is to make the expressions look neat, as we will be replacing terms like $(C_1)^k \text{poly}(k)$ by $(C_2)^k$ where $C_2>C_1$. 
\end{remark}
\section{Proof Overview}

The main technical tool in this paper is a structural result for bounded low-degree functions similar in spirit to Hastad's switching lemma \cite{hastad}. Let $f: \cube^n \rightarrow [0,1]$ be a polynomial of degree $d$, and let $\rho$ denote a random restriction with survival probability $\dfrac{\log(d)}{Cd}$. We show that for some constant $C$,
$$ \Pr \left[ f_{\rho} \text{ is a } (O(d^{-C}), O(d^{C})) \text{ junta} \right] \geq 1- \dfrac{1}{d^{\Omega(1)}}.$$

Once this is established, we can prove Aaronson-Ambainis conjecture for random restrictions as follows: it is easy to see that 
$$ \prob \left[  \var [f_{\rho}] \geq \dfrac{\var [f] \log(d)}{2 C d}\right] \geq \dfrac{\var [f] \log(d)}{2 C d}.$$
This probability will be significantly more than the failure probability of the switching lemma  $ \left( \dfrac{1}{d^{\Omega(1)}} \right)$ (this is the only place where we need the lower bound on $\var [f]$). So for a $\approx O(  \var[f] \log(d)/d)$ fraction of random restrictions, the variance of $f_{\rho}$ is high \textit{and} $f_{\rho}$ can be well approximated by a junta with arity $\poly(d)$. This means one of the coordinates of the junta must have high influence. This concludes the proof. \newline

Now we give a brief overview of how we prove the switching lemma. The starting point is the work by Dinur, Friedgut et al. \cite{DFKO06} which states the following:

\begin{theorem}
\label{thm:dinur_original}
For any $f: \cube^n \rightarrow [0,1]$, if $$\displaystyle \sum_{|S| > k} \hat{f}(S)^2 \leq \exp (-O(k^2 \log(k)/\epsilon)),$$ then $f$ is a $(\epsilon, 2^{O(k)}/\epsilon^2)$ junta.
\end{theorem}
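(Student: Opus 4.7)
The plan is to adapt Friedgut's classical junta theorem for boolean functions to the bounded-range setting, using the strong Fourier-tail hypothesis as leverage.  First I would truncate to low degree: let $g = f^{\leq k}$ denote the degree-$k$ projection of $f$.  The hypothesis immediately gives $\|f-g\|_2^2 \leq \delta$ with $\delta = \exp(-O(k^2\log(k)/\epsilon))$, which is negligible relative to $\epsilon$, so it suffices to exhibit an $(\epsilon/2)$-approximating junta of arity $2^{O(k)}/\epsilon^2$ for $g$.

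Next, following Friedgut, I would identify junta coordinates by individual influences.  Fix a threshold $\tau>0$ (to be chosen), and set $J = \{i \in [n] : \infl_i[g] \geq \tau\}$.  Since $\deg(g) \leq k$ and $\var[g] \leq \var[f] \leq 1$, the total influence satisfies $\infl[g] \leq k$, so $|J| \leq k/\tau$.  Define the junta candidate $h = \sum_{S\subseteq J} \hat{g}(S)\chi_S$; its error decomposes as
\[
\|g-h\|_2^2 \,=\, \sum_{S \not\subseteq J} \hat{g}(S)^2 \,\leq\, \sum_{i \notin J}\ \sum_{S \ni i,\, |S|\leq k} \hat{g}(S)^2 \,=\, \sum_{i \notin J} \infl_i[g],
\]
and the heart of the argument is to bound each inner sum by something much smaller than $\tau$.

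For boolean $f$, Friedgut controls $\sum_{S \ni i,|S|\leq k} \hat{f}(S)^2$ by $3^k\, \infl_i[f]^{3/2}$ via the Bonami-Beckner inequality applied to the discrete derivative $D_i f$, crucially exploiting $|D_i f| \in \{0,1\}$.  In the bounded-range setting one only has $|D_i g| \leq 1$ pointwise (approximately), so the sharp $\ell_{4/3}$ estimate fails; instead, one applies hypercontractivity to $D_i g$ more carefully and pays an extra factor of $2^{O(k)}$, yielding a bound of the form $\sum_{S \ni i,|S|\leq k} \hat{g}(S)^2 \leq 2^{O(k)}\, \infl_i[g]^{\alpha}$ for some $\alpha>1$.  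Summing over $i \notin J$ and using $\sum_i \infl_i[g] \leq k$ then produces an error bound polynomial in $\tau$; setting $\tau = (\epsilon/2^{O(k)})^2$ balances this to $\epsilon/2$ and gives $|J| = k/\tau = 2^{O(k)}/\epsilon^2$, as required.

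The main obstacle, and the reason the hypothesis requires the strong bound $\delta \leq \exp(-O(k^2\log k/\epsilon))$ rather than a merely polynomial tail bound, is that the truncation step must be essentially lossless: any $L^2$ error between $f$ and $g$ propagates through the hypercontractive estimates and is amplified by factors of $2^{O(k)}$ during junta selection.  The $k^2\log k$ in the exponent reflects an iterative refinement one needs to perform: after the first junta pass, the argument is reapplied to the residual $g - h$ to clean up leftover Fourier mass just outside $J$, and each such pass costs another factor of $k$ in the required tail decay.
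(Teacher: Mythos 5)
Your approach has a fatal gap at its central step. The inequality you posit, $\sum_{S \ni i,\, |S|\leq k} \hat{g}(S)^2 \leq 2^{O(k)}\,\infl_i[g]^{\alpha}$ with some $\alpha>1$, is exactly where booleanity is indispensable in Friedgut's theorem \cite{Fri98}, and it is false for bounded (or merely $L^2$-bounded) functions. Hypercontractivity applied to $D_i g$ gives $\sum_{S \ni i, |S|\le k}\hat{g}(S)^2 \le 3^{k}\|D_i g\|_{4/3}^2$, and since $\|h\|_{4/3}\le\|h\|_2$ always, the best available exponent is $\alpha=1$; the gain to $\alpha=3/2$ in the boolean case comes from $|D_i f|\in\{0,1\}$, i.e.\ from small influence forcing a \emph{sparse} derivative, whereas for bounded functions small influence can come from small \emph{amplitude}. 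Concretely, for $f(x)=(1+\eta x_1)/2$ one has $\sum_{S\ni 1,|S|\le k}\hat{f}(S)^2=\infl_1[f]=\eta^2/4$, so any bound of the form $2^{O(k)}\infl_1[f]^{\alpha}$ with $\alpha>1$ fails once $\eta$ is small. With only $\alpha=1$, summing over $i\notin J$ gives $2^{O(k)}\sum_{i\notin J}\infl_i[g]$, which can be as large as $2^{O(k)}k$, and the thresholding argument collapses. (Separately, $g=f^{\le k}$ is not pointwise bounded, so even your parenthetical ``$|D_i g|\le 1$ approximately'' is unjustified, but this is secondary.)

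There is also a global reason no repair along these lines can work: nothing in your argument actually uses a tail bound stronger than roughly $\poly(\epsilon)\,2^{-O(k)}$ (the truncation error only needs to survive the $2^{O(k)}$ amplification), and your closing explanation of the $k^2\log k$ exponent via ``iterative refinement'' does not correspond to any step you perform. But \cite{dinur} exhibit a bounded function with $\sum_{|S|>k}\hat{f}(S)^2\approx\exp(-\Theta(k^2))$ that cannot be $1/3$-approximated without reading $\Omega(n)$ coordinates (this example is quoted in the present paper), so any proof requiring only a singly exponential tail proves a false statement. The actual argument of \cite{dinur}, which this paper cites rather than reproves, is entirely different in character: one restricts the function to push Fourier mass outside the high-influence coordinates down to level one, applies anticoncentration of the resulting linear form together with the interpolation/noise lemma to produce many inputs where the function must exit $[0,1]$, and contradicts pointwise boundedness; the $\exp(-O(k^2\log(k)/\epsilon))$ threshold is inherited from the $\exp(-O(t^2))$ cost of the anticoncentration step, which is precisely the term Theorem \ref{theorem:main_result} improves by exploiting low degree and bounded block sensitivity.
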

In other words, if the Fourier tail above a certain level $k$ is bounded, then $f$ can be well approximated by juntas of arity roughly $2^{O(k)}$.  \newline

We start with the observation that random restrictions have bounded Fourier tails: if the function has degree $d$ and we make a random restriction with survival probability $\dfrac{\log(d)}{Cd}$, using Chernoff bound we can show that with very high probability the Fourier weight above level $\log(d)$ will be low; around the order of $\exp(- \Omega(C \log(d)))$. If we can manage to bring the Fourier weight above $\log(d)$ small enough so that Theorem \ref{thm:dinur_original} applies, then we will get that $f_{\rho}$ can be well approximated by a $\poly(d)$ junta. Unfortunately, if we try this, it turns out that we have to set the survival probability so low that on expectation the variance of $f_{\rho}$ goes down significantly as well. In other words, while it is true that $f_{\rho}$ can be well-approximated by juntas, it is for a trivial reason that its variance itself is very low. (And moreover, this is also true for functions $f:\cube^n \rightarrow \mathbb{R}$ that are merely $L^2$ bounded i.e, $\E [f^2] \leq 1$, so we would not be using any additional structure that arises from the fact that $f$ is \textit{pointwise} bounded.)\newline

In order to make this approach work, we need to improve the tail bound $\exp (-O(k^2 \log(k))/\epsilon)$. The problematic term is the quadratic $\exp(-O(k^2))$ in the exponential. If the dominant term were to the order of $\exp(-O(k))$ instead, the calculations would go through. Can we hope to increase the tail bound to $\exp (-O(k))$ while paying a cost by increasing the junta arity? Unfortunately, again, this is not possible: \cite{DFKO06} constructs a function which shows that the tail bound is essentially tight upto the $\log(k)$ factor - their function has $||f^{>k}||_2^2 \approx \exp (-\Theta(k^2))$ but approximating it to even $1/3$ accuracy requires reading $\Omega(n)$ coordinates. Our key observation is that the function constructed by \cite{DFKO06} has full degree whereas we are working with random restrictions of a low degree function, so in addition to the fact that Fourier tail of $f_{\rho}$ above level $\log(d)$ is very small, we also know that $f_{\rho}$ has degree $d$. Can we hope to improve the tail bound in Theorem \ref{thm:dinur_original} if we have the additional restriction that the function is of degree $d$? Indeed, this turns out to be true. We prove the following result in Theorem \ref{thm:improved_tail}:.

\begin{theorem}
\label{thm:improved_tail}
    There exists a constant $C$ such that the following is true: \newline
    
    If $f: \cube^n \rightarrow [0,1]$ has degree $d$ and $ \displaystyle \sum_{|S|>k} \hat{f}(S)^2 \leq \dfrac{\epsilon}{C^{k} d^C} $, $f$ is a $(\epsilon , \epsilon^{-2} d^C C^k)$ junta.
\end{theorem}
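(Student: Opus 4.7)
The plan is to emulate the Dinur--Friedgut argument behind Theorem~\ref{thm:dinur_original}, replacing the steps where their proof loses an $\exp(k^2)$ factor by ones that exploit the assumption $\deg(f)\leq d$ together with the pointwise bound $f\in[0,1]$. A black-box invocation of Theorem~\ref{thm:dinur_original} cannot work: to land on a junta of arity $\epsilon^{-2} C^k d^C$ one would apply DF at some $k' = \Theta(k+\log d)$, but the hypothesis $\sum_{|S|>k}\hat{f}(S)^2 \leq \epsilon/(C^k d^C)$ is still doubly-exponentially weaker than the $\exp(-(k')^2\log k'/\epsilon)$ tail DF requires. So one has to reopen the argument.

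I would first pick the candidate junta set through the level-$\leq k$ influences
\[
\infl_i^{\leq k}[f] \;:=\; \sum_{S\ni i,\ |S|\leq k}\hat{f}(S)^2,\qquad J \;:=\; \bigl\{i : \infl_i^{\leq k}[f] \geq \tau\bigr\},
\]
with threshold $\tau \approx \epsilon^2/(C^k d^C)$. The identity $\sum_i \infl_i^{\leq k}[f] = \sum_{|S|\leq k}|S|\hat{f}(S)^2 \leq k\var[f] \leq k$ immediately gives $|J| \leq k/\tau \lesssim \epsilon^{-2} C^k d^C$, already matching the claimed arity. Letting $g(x) := \E_{y\in\cube^{[n]\setminus J}}[f(x_J, y)]$, Parseval yields
\[
\|f-g\|_2^2 \;=\; \sum_{S\not\subseteq J,\ |S|>k}\hat{f}(S)^2 \;+\; \sum_{S\not\subseteq J,\ |S|\leq k}\hat{f}(S)^2,
\]
and the first summand is absorbed by the tail hypothesis.

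The technical heart is the low-level piece. The naive bound $\sum_{i\notin J}\infl_i^{\leq k}[f]$ scales with $n$ and is useless. Instead I would run an iterative coordinate-selection scheme in the spirit of Bourgain/DF: at each stage add the coordinate of largest conditional level-$\leq k$ influence in the current residual and track how $\|f - \E[f\mid x_J]\|_2^2$ decays. In DF the $\exp(k^2)$ loss arises from controlling $L^p$ norms of iterated discrete derivatives by hypercontractivity, paying a factor $(p-1)^{k/2}$ at each of up to $k$ derivative steps. When $\deg(f)\leq d$, those derivatives are globally of degree $\leq d$, so one expects to pay a single $\poly(d)$ factor via hypercontractivity combined with $\|f\|_\infty\leq 1$, rather than $k$ cascaded $\poly(k)$ factors; together with a $C^k$ coming from summing level-by-level, this should drive the low-level error down to roughly $k\tau\cdot C^k d^C \lesssim \epsilon$.

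The hardest part will be to verify that the level-wise losses genuinely telescope into a single factor of $d^C$ rather than accumulating an extra $k^k$ or $k!$. If the hypercontractive estimate for iterated derivatives of a bounded degree-$d$ function decouples as sketched, the plan closes. A natural backup is to first apply the noise operator $T_{1-\delta}$, which preserves the range $[0,1]$ and damps $\hat{f}(S)$ by $(1-\delta)^{|S|}$: the extra damping at high levels should absorb any leftover $k$-dependent loss and reduce matters to a cleaner degree-$k$ situation where DF's own technique suffices with just the $C^k d^C$ tax we want.
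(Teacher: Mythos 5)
Your outer skeleton is fine and matches the paper's reduction: threshold the (low-level) influences to get $J$, bound $|J|$ by total influence divided by the threshold, note that the mass at levels $>k$ is absorbed by the tail hypothesis, so everything comes down to showing $\sum_{S\not\subseteq J,\,|S|\leq k}\hat{f}(S)^2$ is small. But that last step is the entire theorem, and your plan for it is a hope, not an argument. The paper does not improve the hypercontractive bookkeeping inside the Bourgain/Dinur--Friedgut iterated-derivative scheme at all. Instead it proves a different statement (Theorem \ref{theorem:main_result}): if a degree-$k$ function $h$ with $\E[h^2]\leq 1$ has mass $\geq\mu$ outside the cube spanned by its $\theta$-influential coordinates, then $h$ is $L^2$-far, by $\mu/(C^k d^C)$, from \emph{every pointwise-bounded degree-$d$} function $g$. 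The mechanism is: restrict randomly to push $\approx\mu$ of Fourier weight to the linear level, split the small linear coefficients into $L\approx (2B)^k d^8$ balanced blocks (Lemma \ref{lemma:balanced_partition}), use Rademacher anticoncentration with a constant threshold $\alpha=O(1)$ per block, and use the interpolation/noise lemma to produce, for many blocks simultaneously, disjoint block flips on which $h$ moves by $\gtrsim d^3/L$; summing over blocks exceeds $15d^2$, contradicting $\mathrm{bs}(g)\leq 6d^2$ (Theorem \ref{thm:block_sensitivity}, Beals et al.) if $g$ were $\delta$-close to $h$. Applying this with $h=f^{\leq k}$ and $g=f$ itself converts the statement into the tail bound (Theorem \ref{theorem:improved_tail_bound_final}), which is exactly the theorem you are asked to prove. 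The reason the $\exp(-k^2)$-type loss disappears is that one only needs each block's contribution to be $\sim d^2/L$ rather than $\Omega(1)$, so the anticoncentration parameter stays constant; it has nothing to do with telescoping hypercontractive losses across derivative steps.

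Concretely, the gap in your proposal is the sentence you yourself flag: you assert that because $\deg(f)\leq d$ the cascaded $(p-1)^{k/2}$ losses in DF ``should'' collapse to a single $\poly(d)$ factor ``via hypercontractivity combined with $\|f\|_\infty\leq 1$,'' but you give no mechanism, and none is apparent. In the DF argument the functions being hit by hypercontractivity at level $\leq k$ are already degree $\leq k$ objects, so the degree-$d$ hypothesis buys nothing there; and pointwise boundedness enters DF only through the step where the function is forced to leave $[0,1]$ by a constant amount, which is precisely what costs the large anticoncentration threshold. The DF lower-bound example shows the statement is genuinely false without using boundedness against \emph{low degree} somewhere, and your sketch never couples the two; the paper couples them through the $\mathrm{bs}\leq 6d^2$ bound, which is absent from your plan. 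The $T_{1-\delta}$ backup does not repair this: mild noise neither reduces the degree nor changes the fact that DF's own technique needs a $2^{-\Theta(k^2)}$-type tail, so it cannot reach the $\epsilon/(C^kd^C)$ hypothesis you are given. As it stands the proposal would not close without importing the paper's key lemma (distance from bounded low-degree functions via block sensitivity) or an equivalent new idea.
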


Below we briefly discuss how we are able to improve the tail bound under the additional degree assumption. Dinur, Friedgut, Kindler, O'Donnell \cite{DFKO06} prove their tail bound by showing the following result (we are omitting the exact quantitative parameters here for reading convenience). 
\begin{theorem}
\label{thm:distance_from_bounded}
Let $h: \cube^n \rightarrow \mathbb{R}$ be a degree $k$ function with $\E[h^2] \leq 1$ (but not necessarily pointwise bounded) which cannot be approximated by $2^{O(k)}$ juntas to accuracy $\mu$. Then, for any function $g: \cube^n \rightarrow [0,1]$, $E [(h-g)^2] \geq \varepsilon$. 
\end{theorem}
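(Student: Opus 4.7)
The plan is to prove the contrapositive: if there exists $g: \cube^n \to [0,1]$ with $\|h-g\|_2^2 < \varepsilon$, then $h$ is $\mu$-approximable by some $2^{O(k)}$-junta. The degree bound on $h$ will limit the number of large-influence coordinates, while the boundedness of $g$ must be leveraged to ensure the Fourier mass of $h$ outside the chosen coordinates is small.

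First I fix a threshold $\tau \asymp \mu \cdot 2^{-Ck}$ and let $J = \{i \in [n] : \infl_i[h] \geq \tau\}$. Since $\deg(h) \leq k$ and $\|h\|_2^2 \leq 1$, one has $\infl[h] = \sum_S |S|\, \hat{h}(S)^2 \leq k$, forcing $|J| \leq k/\tau = 2^{O(k)}$. The natural junta approximator is the Fourier projection $h_J = \sum_{S \subseteq J} \hat{h}(S)\chi_S$, whose squared error equals $\sum_{S \not\subseteq J}\hat{h}(S)^2$. Since each such $S$ has $|S| \leq k$ and intersects $J^c$, this quantity is bounded above by $\sum_{i \in J^c} \infl_i[h]$, so the problem reduces to showing that this tail influence sum is at most $\mu$.

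The main technical obstacle is controlling $\sum_{i \in J^c} \infl_i[h]$ using the closeness of $h$ to the bounded $g$. Heuristically, if this sum were large then $h$ would carry substantial Fourier weight on sets spread across many low-influence coordinates; by the $(2,q)$-hypercontractive inequality $\|h\|_q \leq (q-1)^{k/2}\|h\|_2$, such a spread structure would force $h$ to take large absolute values on a non-negligible subset of the cube, contradicting $\|h-g\|_2^2 < \varepsilon$ with $g \in [0,1]$. To make this quantitative, I would analyze the noise-damped function $T_\rho h$ and compare $\|T_\rho h - T_\rho g\|_2^2 \leq \varepsilon$ against the Parseval identity $\|T_\rho h\|_2^2 = \sum_S \rho^{2|S|}\hat{h}(S)^2$, choosing $\rho$ close to $1$ and $q$ slightly larger than $2$ so that the gap between hypercontractive bounds on $h$ and the boundedness of $g$ translates into a bound on the low-influence tail. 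The delicate part is balancing $\tau$, $\rho$, and $q$ so that the hypercontractive overhead $(q-1)^{k/2}$ interacts correctly with the target junta size $2^{O(k)}$ and accuracy $\mu$; I expect the final dependence of $\varepsilon$ on $\mu$ and $k$ to involve a factor like $\exp(-Ck^2 \log k)$, matching the tail bound used in Theorem \ref{thm:dinur_original}.
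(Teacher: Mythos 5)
Your reduction is fine as far as it goes: taking $J=\{i:\infl_i[h]\geq\tau\}$, bounding $|J|\leq\infl[h]/\tau\leq k/\tau$, and noting that the projection $\sum_{S\subseteq J}\hat h(S)\chi_S$ has error $\sum_{S\not\subseteq J}\hat h(S)^2\leq\sum_{i\in J^c}\infl_i[h]$ is the standard setup (modulo a small slip: with $\tau\asymp\mu 2^{-Ck}$ the arity is $2^{O(k)}/\mu$, not $2^{O(k)}$, which is consistent with the actual statement in Theorem \ref{thm:dinur_original} but worth saying). The problem is that this setup only isolates the real content of the theorem, namely that closeness of $h$ to a pointwise bounded $g$ forces the Fourier weight on sets meeting the low-influence coordinates to be small, and your sketch of that step does not work. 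Hypercontractivity, $\|h\|_q\leq(q-1)^{k/2}\|h\|_2$, gives an \emph{upper} bound on how often a low-degree function is large, whereas you need a \emph{lower} bound: you must exhibit an event of non-negligible probability on which $h$ leaves a neighborhood of $[0,1]$, so that $\E[(h-g)^2]$ cannot be smaller than $\varepsilon$. Likewise the proposed comparison of $\|T_\rho h\|_2^2=\sum_S\rho^{2|S|}\hat h(S)^2$ with $\|T_\rho h-T_\rho g\|_2^2\leq\varepsilon$ is blind to the influence structure: $T_\rho$ attenuates Fourier weight by degree, not by which coordinates a set touches, so no choice of $\rho$, $q$, $\tau$ lets it distinguish weight on sets inside $J$ (which may be large) from weight on sets meeting $J^c$ (which is what must be controlled). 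The mechanism you are counting on is simply not there.

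For comparison, the argument that actually proves this statement (it is quoted here from Dinur et al.\ \cite{dinur} with parameters suppressed; the paper itself only proves the refined variant, Theorem \ref{theorem:main_result}, where $g$ is additionally of degree $d$) uses the low-influence hypothesis on $J^c$ in an essential, quantitative way. One fixes the $J$-coordinates randomly and randomly restricts $J^c$ at a scale chosen so that a constant fraction of $\sum_{S\not\subseteq J}\hat h(S)^2$ collapses onto level one of the restricted function (Lemma \ref{lemma:level_one}); because every surviving coordinate has influence below $\tau$, no single linear coefficient dominates, so anticoncentration for Rademacher linear forms (Lemma \ref{lemma:anticoncentration}) lower-bounds the probability that the linear part exceeds a threshold; the Chebyshev-interpolation noise lemma (Lemmas \ref{lemma:interpolation} and \ref{lemma:noise_lemma}) then converts a large linear part into a nearby point at which $h$ has moved by $\Omega(1)$, i.e.\ exited $[0,1]$, at the cost of a factor $2(k+1)$ in the threshold and a factor $(k+1)B^k$ in probability. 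The $\exp(-O(k^2\log k/\mu))$ form of $\varepsilon$ is exactly the anticoncentration price $\exp(-Kt^2)$ with $t\asymp k\sqrt{\log k/\mu}$ coming from that interpolation loss; in your write-up this dependence is only "expected", not derived. Without the restriction-to-level-one step, the anticoncentration input, and the interpolation argument, the proof has a gap precisely at the theorem's core.
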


To prove Theorem \ref{thm:dinur_original}, \cite{DFKO06} applies Theorem \ref{thm:distance_from_bounded} on the truncated function $h = f^{\leq k} = \displaystyle \sum_{|S| \leq k} \hat{f}(S) \chi_S$ and takes $g$ to be the original function $f$. This then lower bounds $E[(f-f^{\leq k})^2]$ which is precisely the Fourier tail above weight $k$. Thus, the distance lower bound $\varepsilon$ in Theorem \ref{thm:distance_from_bounded} governs the Fourier tail lower bound in Theorem \ref{thm:dinur_original}. Since we have the additional information that $f$ is of degree $d$, for our purposes it will suffice to bound the distance from bounded degree $d$ functions, not necessarily \textit{all} bounded functions. In Theorem \ref{theorem:main_result} we prove a result of the following form (again, we are omitting the exact parameters for reading convenience)

\begin{theorem}
\label{thm:distance_from_low_degree}
Let $h: \cube^n \rightarrow \mathbb{R}$ be a degree $k$ function with $\E[h^2] \leq 1$ (but not necessarily pointwise bounded) which cannot be approximated by $2^{O(k)}$ juntas to accuracy $\mu$. Then, for any \textit{degree $d$ }function $g: \cube^n \rightarrow [0,1]$, $E [(h-g)^2] \geq \tilde{\varepsilon}$. 
\end{theorem}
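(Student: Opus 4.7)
The plan is to adapt the proof of the weaker Theorem \ref{thm:distance_from_bounded} due to DFKS \cite{dinur}, exploiting the new degree restriction on $g$ at the key step where one compares $h$ to a bounded function. DFKS's strategy has two parts: first, a \emph{junta-extraction} step showing that if a degree-$k$, $L^2$-bounded $h$ fails to be $\mu$-approximated by any $2^{O(k)}$-junta, then $h$ must carry a ``spread'' family of Fourier characters of total mass $\Omega(\mu)$ on sets of size $\le k$, with no single coordinate appearing in too many of them; and second, a \emph{comparison} step in which this spread family is used together with the pointwise boundedness of $g$ to force $\|h-g\|_2$ from below. The junta-extraction step uses only the degree and $L^2$ hypotheses on $h$, so it transfers to our setting essentially intact (possibly at the cost of absorbing a $d^{C}$ factor into the junta size).

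The improvement I seek comes from replacing DFKS's comparison step. I would begin by splitting $g = g^{\le k} + g^{>k}$, using that $h$ has degree $\le k$ to get $\|h-g\|_2^2 = \|h-g^{\le k}\|_2^2 + \|g^{>k}\|_2^2$. If the second summand is already $\ge \tilde{\varepsilon}$, we are done; otherwise $g^{\le k}$ is a degree-$k$ polynomial that is $L^2$-close to the $[0,1]$-valued $g$, and the task reduces to lower bounding $\|h - g^{\le k}\|_2$ for a nearly-bounded degree-$k$ function. In this reduced setting both sides of the difference live in degree $\le k$, so applying Bonami--Beckner hypercontractivity once with a parameter depending only on $k$ should give a distance bound of the form $\exp(-O(k))$, in place of DFKS's $\exp(-O(k^2 \log k))$, which in their setting came from having to absorb Fourier mass across \emph{all} levels up to $n$. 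Combining this with the reduction of the previous paragraph would yield $\tilde{\varepsilon}$ of the desired form $\mu/(C^k d^C)$.

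The main obstacle, I expect, will be certifying that $g^{\le k}$ is close to a \emph{genuinely bounded} function: we know $g^{\le k}$ is $L^2$-close to $g \in [0,1]$, but $g^{\le k}$ itself can easily be unbounded pointwise. A naive use of hypercontractivity on $g$ to bound $\|g^{>k}\|_4$ or $\|g^{>k}\|_\infty$ would cost an exponential factor such as $3^{d/2}$, which is fatal to the target $\poly(d)$ dependence. The plan would be to peel off levels one at a time: since $g^{>k}$ is supported on the \emph{finite} band of levels $k+1,\ldots,d$, iteratively applying the Laplacian (or a discrete level-shifting operator) together with Markov-type estimates for bounded polynomials on the cube should produce a polynomial-in-$d$ transfer of $L^\infty$-boundedness down to each individual Fourier level. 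Making this quantitative, and then composing it with the DFKS spread family so that the final inner product $\langle h, g - g^{\le k}\rangle$ is controlled without blowing up in $k$, is where most of the technical work will lie.
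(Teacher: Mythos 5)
Your decomposition $\|h-g\|_2^2 = \|h-g^{\le k}\|_2^2 + \|g^{>k}\|_2^2$ is correct, but the ensuing comparison step is a genuine gap, and I think it is circular rather than merely incomplete. Nothing prevents $g^{\le k}$ from being equal (or arbitrarily $L^2$-close) to $h$: in that case the first summand vanishes and your task becomes ``show $\|g^{>k}\|_2^2 \ge \tilde{\varepsilon}$ whenever $g$ is a bounded degree-$d$ function whose level-$\le k$ part is far from juntas,'' which is precisely the improved Fourier-tail lower bound (Theorem \ref{theorem:improved_tail_bound_final}) that is \emph{deduced from} the theorem you are trying to prove, by exactly the substitution $h = g^{\le k}$. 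So the split does not reduce the difficulty; it reproduces the statement. The claim that ``applying Bonami--Beckner once with a parameter depending only on $k$'' lower bounds $\|h-g^{\le k}\|_2$ has no mechanism behind it: hypercontractivity compares norms of a fixed low-degree function, but there is no reason the difference of two degree-$k$ functions is large, and the only hypotheses that could force it — pointwise boundedness of $g$ and non-junta-approximability of $h$ — are not engaged at that point of your argument. Your only quantitative use of $\deg(g)\le d$ is that $g^{>k}$ lives on levels $k+1,\dots,d$ plus a speculative level-peeling scheme; truncation and level projections are not $L^\infty$-bounded with $\poly(d)$ constants, and you give no concrete estimate that would avoid the exponential losses you correctly identify as fatal.

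For comparison, the paper exploits the degree of $g$ through a completely different channel: the block-sensitivity bound $\mathrm{bs}(g) \le 6d^2$ for bounded degree-$d$ functions (Theorem \ref{thm:block_sensitivity}). One follows the DFKO outline — restrict to push mass of $h$ onto level one, keep only the small linear coefficients via Lemma \ref{lemma:BGB_biased} — but then partitions those coefficients into $L \approx (2B)^k d^{8}/\var[f]$ balanced blocks (Lemma \ref{lemma:balanced_partition}); anticoncentration within each block needs only a \emph{constant} threshold $\alpha$, because the noise/interpolation step (Lemma \ref{lemma:noise_lemma}) then produces, with probability $\ge (2B)^{-k}$ per block, a perturbation supported inside that block on which $h$ moves by only $\approx d^3 (2B)^k/L$. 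Summing over the disjoint blocks forces a total movement $\ge 15 d^3$, which contradicts $\|h-g\|_2^2 \le \delta$ because $g$ can move by at most $6d^2$ over any disjoint family of blocks. It is this replacement of ``each deviation must be $\Omega(1)$'' by ``the deviations must merely add up to $\omega(d^2)$'' that removes the $\exp(-\Theta(k^2))$ factor; some ingredient of this kind is necessary, since the DFKO example shows the bound cannot be improved for general bounded $g$, so any correct proof must use the degree of $g$ in an essential, quantitative way — which your proposal currently does not.
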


The parameter $\tilde{\varepsilon}$ in Theorem \ref{thm:distance_from_low_degree} is bigger than the corresponding $\varepsilon$ parameter in Theorem \ref{thm:distance_from_bounded} because we are only lower bounding the distance of $h$ from bounded \textit{low-degree} functions whereas Theorem \ref{thm:distance_from_bounded} lower bounds the distance of $h$ from \textit{arbitrary} bounded functions. It turns out that the improvement in this parameter is sufficiently good for the random restriction approach to go through. \newline

In order to prove Theorem \ref{thm:distance_from_low_degree} we shall use the main idea of the proof of \cite{DFKO06} along with a structural restriction for bounded low-degree functions discovered first in \cite{BBCMdW98}. Given any function $f: \cube^n \rightarrow \mathbb{R}$ and $x \in \cube^n$ define the \textit{block sensitivity} of $f$ at $x$ to be
$$ \text{bs}(f,x) = \sup \left[ \displaystyle \sum_{j \in [k]} \left| f(x) - f(x^{(B_j)}) \right| \right]$$
where the supremum ranges over all partitions $(B_1, B_2, \cdots , B_k)$ of the variables ($x^{(B_j)}$ denotes $x$ with the coordinates in $B_j$ flipped). Define the block sensitivity of $f$ to be $\text{bs}(f) = \sup_x \text{bs}(f,x)$. We shall use the following fact about bounded low-degree functions:

\begin{theorem}
\label{thm:block_sensitivity}
    \cite{BBCMdW98} If $f: \cube^n \rightarrow [0,1]$ has degree $d$, $\text{bs}(f) \leq 6d^2$.
\end{theorem}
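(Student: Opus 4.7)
The plan is to reduce block sensitivity to derivative estimates for bounded univariate polynomials on $[0,1]$, and then invoke the Markov brothers inequality. Fix $x \in \cube^n$ and a partition $(B_1, \ldots, B_k)$ nearly attaining the supremum in the definition of $\text{bs}(f,x)$. Split the block indices into $S_+ = \{j : f(x^{(B_j)}) > f(x)\}$ and $S_- = \{j : f(x^{(B_j)}) < f(x)\}$, so that
$$\text{bs}(f,x) = A_+ + A_-, \qquad A_\pm := \sum_{j \in S_\pm} \bigl| f(x) - f(x^{(B_j)}) \bigr|.$$
It will be enough to bound $A_+$ and $A_-$ separately by $3d^2$; I will in fact get $2d^2$ for each.

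Next I would build a multilinear polynomial $q : \{0,1\}^k \to [0,1]$ of degree at most $d$ that records how $f$ behaves as arbitrary subsets of the blocks are flipped. Let $x'(y) \in \cube^n$ be obtained from $x$ by flipping block $B_j$ iff $y_j = 1$, and set $q(y) = f(x'(y))$. Substituting $x'_i = x_i(1 - 2 y_{j(i)})$ for $i \in B_{j(i)}$ into the Fourier expansion of $f$ and reducing via $y_j^2 = y_j$, each character $\chi_S$ contributes at most one factor of $y_j$ per block it meets, so $\deg q \leq d$. By construction $q(\mathbf{0}) = f(x)$ and $q(e_j) = f(x^{(B_j)})$. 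Its multilinear extension $Q : [0,1]^k \to \mathbb{R}$ can be written as $Q(y) = \E[q(Y)]$ with independent $Y_i \sim \mathrm{Bernoulli}(y_i)$, and therefore takes values in $[0,1]$.

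Finally I would restrict $Q$ to the diagonal supported on $S_+$: let $p_+(s) = Q(s \cdot \mathbf{1}_{S_+})$, a univariate polynomial of degree $\leq d$ sending $[0,1]$ into $[0,1]$. Since $Q$ is multilinear, $p_+'(0) = \sum_{j \in S_+} \bigl( Q(e_j) - Q(\mathbf{0}) \bigr) = A_+$, and the Markov brothers inequality (applied after rescaling $[-1,1]$ to $[0,1]$, which introduces a factor of $2$) gives $|p_+'(s)| \leq 2d^2$ on $[0,1]$. An identical argument on $S_-$ yields $A_- \leq 2d^2$, so $\text{bs}(f,x) \leq 4d^2 \leq 6d^2$, uniformly in $x$.

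The step I expect to be the most delicate is verifying $\deg q \leq d$. A single block $B_j$ can contain many coordinates from a given Fourier monomial $\chi_S$, so naively the exponent of $y_j$ could be as large as $|S \cap B_j|$; the point is that $y_j^2 = y_j$ on $\{0,1\}^k$ collapses this to $0$ or $1$, and the effective degree in $y$ equals the number of distinct blocks intersecting $S$, which is at most $|S| \leq d$.
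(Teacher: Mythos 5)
Your argument is correct, and it is essentially the standard proof of this fact: the paper itself gives no proof (it simply cites Beals et al.), and the cited argument is exactly this block-flip construction of a bounded multilinear polynomial followed by restriction to a diagonal and the Markov brothers inequality. Your bookkeeping is sound (the degree reduction via $y_j^2 = y_j$, the identity $p_+'(0) = \sum_{j \in S_+}(Q(e_j) - Q(\mathbf{0})) = A_+$, and the boundedness of the multilinear extension), and the split into $S_+$ and $S_-$ correctly handles cancellation, yielding the even stronger bound $\text{bs}(f) \leq 4d^2$ (in fact $2d^2$, since rescaling both domain and range shows a degree-$d$ polynomial mapping $[0,1]$ to $[0,1]$ has derivative at most $d^2$ there), which certainly implies the stated $6d^2$.
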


We give a high-level overview of how we are able to improve upon the bound of $\varepsilon$ using the fact that the block sensitivity of a bounded low degree function is small. At one point in their proof, \cite{DFKO06} lower bounds the probability of a linear form of Rademacher random variables $l(x_1, x_2, \cdots , x_t) = a_1 x_1 + \cdots + a_t x_t$ exceeding a certain threshold times its standard deviation, i.e.,
$$ \prob \left[ a_1 x_1 + \cdots + a_t x_t \geq \alpha \sqrt{a_1^2 + \cdots + a_t^2}\right].$$
For each such point $x$ where this linear form is high, \cite{DFKO06} shows that \textit{many related points} $x'$ must have $f(x') > 2$. Using this they conclude that $f$ must deviate from the interval $[0,1]$ too often and therefore cannot be approximated by any bounded function. \newline

    We follow the proof of \cite{DFKO06} up until this point. Instead of directly lower bounding the probability that $a_1x_1 + \cdots + a_tx_t$ is high, we partition the set of variables $[t]$ into $L$ blocks $B_1 , \cdots , B_L$ ($L$ is an appropriately chosen parameter) such that each block gets roughly same total weight: for all $j \in [L],$
    $$ \displaystyle \sum_{i \in B_j} a_i^2 \geq \dfrac{a_1^2 + \cdots + a_t^2}{2L}.$$
    It will turn out that the $a_i$'s are sufficiently small for such a partition to exist. For each block we lower bound the probability that the linear form restricted to this block is high:    
$$ \prob \left[\displaystyle \sum_{j \in B_i} a_j x_j \geq \tilde{\alpha} \sqrt{\displaystyle \sum_{j \in B_i} a_j^2} \right]. $$
Now, on a random assignment $z$, the linear form restricted to many of these blocks will be high. Take such a block $B_i$: $\displaystyle \sum_{j \in B_i} a_j z_j \geq \Tilde{\alpha} \sqrt{\displaystyle \sum_{j \in B_i} a_j^2}$. For each such block we will be able to find a large number of related points $z_i$ such that $|f(z) - f(z_i)|$ is large. Crucially, these related points will differ from $x$ only at $B_i$. Thus, we will find many points which differ from $z$ at disjoint sets and whose $f$ differ from $z$ significantly. This will show that $f$ cannot be too close to a bounded low degree function, because those functions have low block sensitivity. \newline

Our advantage is that we need to set $\alpha'$ so that we can conclude $|f(z) - f(z')|$ is only somewhat larger than $\Omega(d^2/L)$ (as opposed to $\Omega(1)$ in \cite{DFKO06}) - by setting $L$ large enough this allows us to set a much smaller $\alpha$ and get rid of the quadratic exponential dependence.
\section{Tools}

In this section we compile some lemmas that we shall use in our proof.
\subsection*{A reverse Markov inequality}
We will use the following simple inequality throughout the proof.

\begin{lemma}
    \label{lemma:reverse-markov}
    Let $X$ be a random variable such that $X \leq M$ with probability 1. Let $\E [X]=  \mu > 0$. Then, $\prob [X \geq \mu/2] \geq \dfrac{\mu}{2M}$
\end{lemma}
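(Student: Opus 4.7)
The plan is to prove this by splitting $\E[X]$ into two parts according to whether or not $X$ exceeds the threshold $\mu/2$, and then bounding each part in the obvious way. Let $p = \prob[X \geq \mu/2]$. Then I would write
\[
\mu = \E[X] = \E[X \cdot \mathbf{1}(X \geq \mu/2)] + \E[X \cdot \mathbf{1}(X < \mu/2)].
\]
On the event $\{X \geq \mu/2\}$ I have the a.s.\ upper bound $X \leq M$, so the first term is at most $Mp$. On the event $\{X < \mu/2\}$ the random variable $X$ is bounded above by $\mu/2$ (regardless of whether it is negative), so the second term is at most $(\mu/2)(1-p)$.

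Combining these gives $\mu \leq Mp + (\mu/2)(1-p) \leq Mp + \mu/2$, which rearranges to $p \geq \mu/(2M)$, as desired. The only subtlety worth flagging is that $X$ is not assumed to be nonnegative, but this does not matter because the bound $X \cdot \mathbf{1}(X < \mu/2) \leq (\mu/2)\,\mathbf{1}(X < \mu/2)$ is valid pointwise regardless of sign. There is no real obstacle here; the statement is a one-line consequence of the tower-style decomposition of expectation, and the only choice to make is the natural threshold $\mu/2$ (any threshold $\alpha\mu$ with $\alpha \in (0,1)$ would yield a bound $p \geq (1-\alpha)\mu / (M - \alpha\mu)$, and $\alpha = 1/2$ gives the clean form used later).
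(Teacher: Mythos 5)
Your proof is correct and is essentially the same as the paper's: both bound $\E[X]$ by $M\cdot\prob[X \geq \mu/2] + (\mu/2)\cdot\prob[X < \mu/2]$, the only difference being that the paper phrases the rearrangement as a proof by contradiction while you argue directly. Your remark about handling possibly negative $X$ is a valid (if minor) point of care.
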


\begin{proof}
    Assume $\prob [X \geq \mu/2] < \dfrac{\mu}{2M}$. Then,
    $$ \E [X] \leq \prob [X \geq \mu/2] M + \prob [X \leq \mu/2] \dfrac{\mu}{2} < \mu, $$
    contradiction.
\end{proof}
\subsection*{An anticoncentration inequality for linear forms of Rademacher random variables}
\begin{lemma}
\label{lemma:anticoncentration}
There exists a universal constant $K$ such that the following holds: let $x_1, \cdots , x_n$ be independent Rademacher random variables and let $l(x_1, \cdots , x_n) = a_1 x_1 + \cdots + a_n x_n$. Let $\sigma = \sqrt{a_1^2 + \cdots + a_n^2}$. Suppose $|a_i| \leq \dfrac{\sigma}{Kt}$. Then,
$$ \Pr [l(x_1, \cdots , x_n) \geq t \sigma ] \geq \exp (- K t^2).$$
\end{lemma}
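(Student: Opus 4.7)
The plan is a block-decomposition argument: partition the coordinates into $M = \lceil 4t^2 \rceil$ blocks of nearly equal variance, lower bound (via Paley--Zygmund) the probability that each block's partial sum is positive and of prescribed magnitude, and combine the per-block estimates using the independence of disjoint blocks.

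First I would apply a greedy ``longest processing time first'' (LPT) scheduling algorithm to the weights $\{a_i^2\}_{i=1}^n$, producing a partition $B_1, \dotsc, B_M$ of $[n]$. The standard LPT analysis shows $\max_j L_j - \min_j L_j \le \max_i a_i^2$, where $L_j := \sum_{i \in B_j} a_i^2$; combined with the hypothesis $a_i^2 \le \sigma^2/(K^2 t^2)$ this forces every $L_j \in [\sigma^2/(2M),\, 2\sigma^2/M]$ once $K$ is taken larger than a fixed absolute constant. Writing $\sigma_j := \sqrt{L_j}$ and $S_j := \sum_{i \in B_j} a_i x_i$, the random variables $\{S_j\}_{j=1}^M$ are mutually independent, with $\E S_j^2 = \sigma_j^2$ and $\E S_j^4 \le 3 \sigma_j^4$.

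Next I would apply Paley--Zygmund to the nonnegative random variable $S_j^2$:
\[
\Pr\!\left[S_j^2 \ge \tfrac{1}{2}\sigma_j^2\right] \;\ge\; \tfrac{(\sigma_j^2/2)^2}{\E S_j^4} \;\ge\; \tfrac{1}{12}.
\]
Since $S_j$ is symmetric about $0$, this yields $\Pr[S_j \ge \sigma_j/\sqrt{2}] \ge 1/24$, so by independence of the $S_j$,
\[
\Pr\!\left[\,\forall\,j \in [M]:\; S_j \ge \sigma_j/\sqrt{2}\,\right] \;\ge\; 24^{-M}.
\]
On this event, using $\sigma_j \ge \sigma/\sqrt{2M}$, we get
$l(x) = \sum_j S_j \ge \tfrac{1}{\sqrt{2}} \sum_j \sigma_j \ge \tfrac{M \sigma}{\sqrt{2}\cdot\sqrt{2M}} = \sigma\sqrt{M}/2 \ge t\sigma$, and therefore $\Pr[l \ge t\sigma] \ge 24^{-M} = \exp(-M \log 24) \ge \exp(-K t^2)$ for a suitable universal $K$.

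The main obstacle I anticipate is one of calibrating the single constant $K$ so that it simultaneously (i) renders the hypothesis $|a_i|\le \sigma/(Kt)$ strong enough for the LPT partition to be balanced to a constant factor, and (ii) dominates the final exponent $M \log 24$ absorbed into $K t^2$ (plus lower-order terms from $\lceil\cdot\rceil$). A minor edge case is the regime where $t$ is bounded by a constant, in which case the statement reduces to a direct Paley--Zygmund bound on $l$ itself (taking $M=1$).
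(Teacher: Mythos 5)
The paper itself does not prove this lemma at all: its ``proof'' is a citation to Equation~4.2 of Ledoux--Talagrand \cite{LT91}, a Kolmogorov-type lower exponential bound for Rademacher sums. Your block-decomposition argument is therefore a genuinely different, self-contained route, and in the main regime it works: plain greedy assignment (sorting is not even needed) does guarantee $\max_j L_j-\min_j L_j\le\max_i a_i^2$, and since the hypothesis forces at least $K^2t^2>M$ nonzero coefficients, no block is empty and every $L_j$ lies in $[\sigma^2/(2M),\,2\sigma^2/M]$ once $K^2t^2\ge 2M$; Paley--Zygmund with $\E S_j^4\le 3\sigma_j^4$ plus symmetry gives $\Pr[S_j\ge\sigma_j/\sqrt2]\ge 1/24$; independence across blocks and $\sqrt M\ge 2t$ then yield $\Pr[l\ge t\sigma]\ge 24^{-M}\ge\exp(-Kt^2)$ for, say, $t\ge 1$, with all constants calibrating as you indicate. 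What your route buys is an elementary proof in place of an external reference; since only crude constants are needed downstream, nothing is lost quantitatively.

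The one genuine problem is the ``minor edge case'' of small $t$, and it cannot be fixed in the way you describe, because the lemma as literally stated is false there: for $0<t<\sqrt{\ln 2/K}$ the hypothesis $|a_i|\le\sigma/(Kt)$ is essentially vacuous, yet by symmetry $\Pr[l\ge t\sigma]\le\Pr[l>0]\le 1/2<\exp(-Kt^2)$ whenever $\sigma>0$. Your $M=1$ Paley--Zygmund step only gives the constant $1/24$, i.e.\ it proves $\Pr[l\ge t\sigma]\ge\min\left(1/24,\ \exp(-Kt^2)\right)$, which is the correct form of the inequality (and is what the cited bound in \cite{LT91} actually provides: such lower bounds come with a constant prefactor or a range restriction on $t$). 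This weaker form is also all the paper uses: inside the proof of Theorem~\ref{theorem:main_result} the lemma is applied with $t=\alpha\le 1$ and only the constant lower bound $\exp(-K)$ is extracted. So you should either restrict your statement to $t\ge 1$ or state the conclusion as $\Pr[l\ge t\sigma]\ge c\exp(-Kt^2)$ for a universal $c>0$; with that amendment your proof is complete, and it suffices for every use of the lemma in the paper.
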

\begin{proof}
    Equation 4.2 in \cite{LT91}.
\end{proof}
\subsection*{Random restrictions have small tail}

\begin{lemma}
\label{lemma:bounded_tail}
    Let $f:\cube^n \rightarrow \mathbb{R}$ have degree $d$, $C>1$ be a sufficiently large constant, and let $\rho = (S, y \in \cube^{[n] \setminus S})$ be a random restriction with survival probability $\dfrac{\log(d)}{Cd}$. Let $k = \log(d)$. Then,
    $$ \E \left[ \displaystyle \sum_{|T| > k} \hat{f_y}(T)^2 \right] \leq \exp(-C \log(d)/8) \var[f].$$
\end{lemma}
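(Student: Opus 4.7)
The plan is to open the left-hand side in terms of Fourier coefficients and then reduce everything to a single binomial tail estimate. Fix an alive set $S$ and $y\in \cube^{[n]\setminus S}$. For $T\subseteq S$, the Fourier expansion of $f$ gives $\hat{f_y}(T) = \sum_{U\subseteq [n]\setminus S} \hat{f}(T\cup U)\,\chi_U(y)$, so independence of the coordinates of $y$ together with orthogonality of the characters yields $\E_y[\hat{f_y}(T)^2] = \sum_{U\subseteq [n]\setminus S}\hat{f}(T\cup U)^2$. Summing this over $T\subseteq S$ with $|T|>k$ and then taking expectation over the random $S$ collapses the double sum into
$$\E_\rho\!\left[\sum_{|T|>k}\hat{f_y}(T)^2\right] \;=\; \sum_{R\subseteq[n]}\hat{f}(R)^2 \cdot \prob\!\left[|R\cap S|>k\right],$$
where $|R\cap S|\sim \mathrm{Bin}(|R|,p)$ with $p=\log(d)/(Cd)$.

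Next, I would bound the binomial tail uniformly over $R$. Because $f$ has degree at most $d$, only sets with $k<|R|\le d$ contribute, and for any such $R$ the mean $\mu := p|R|$ satisfies $\mu \le \log(d)/C$; in particular the target value $k=\log(d)$ is at least $C$ times the mean. The standard multiplicative Chernoff inequality then gives
$$\prob\!\left[\mathrm{Bin}(|R|,p)>k\right] \;\le\; \binom{|R|}{k}p^k \;\le\; \left(\frac{e|R|p}{k}\right)^{\!k} \;\le\; \left(\frac{e}{C}\right)^{\!\log(d)},$$
and choosing the absolute constant $C$ appropriately converts this into the claimed per-coefficient bound $\exp(-C\log(d)/8)$.

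Substituting the uniform tail bound back into the identity, noting that $R=\emptyset$ does not contribute (its size is not larger than $k$) and that $\sum_{R\ne\emptyset}\hat{f}(R)^2 = \var[f]$, immediately produces the stated inequality. The argument is essentially a computation; the one place I expect to spend any care, and hence the main obstacle, is the Chernoff bookkeeping: the naive form of the inequality produces an exponent of order $\log(d)\log(C/e)$, so to reach the exponent $C\log(d)/8$ written in the conclusion one must either sharpen the tail estimate slightly or pick the absolute constants carefully so that the two forms match in the relevant regime of $C$.
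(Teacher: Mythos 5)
Your reduction is exactly the paper's: the same Fourier identity collapses the left-hand side to $\sum_{R}\hat{f}(R)^2\,\prob[|R\cap S|>k]$, and both you and the paper then finish with a binomial tail estimate. The genuine gap is in your last step, and it is not bookkeeping that can be fixed. Your (correct) bound $\prob[\mathrm{Bin}(|R|,p)>k]\le\binom{|R|}{k}p^k\le (e/C)^{k}=\exp\left(-(\ln C-1)\,k\right)$ implies the target $\exp(-Ck/8)$ only when $\ln C-1\ge C/8$, which holds for $C$ in a bounded window and fails for every sufficiently large $C$ --- precisely the regime the lemma is stated in and the regime needed later (the proof of Theorem \ref{theorem:approx_junta} must take the survival-probability constant large so that the failure probability beats $d^{2C}$-type losses). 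Moreover no sharpening of the tail estimate is possible: for $f=\chi_V$ with $|V|=d$ the left-hand side of the lemma is exactly $\prob[|V\cap S|>k]$, and a direct computation of, say, $\prob[|V\cap S|=\lceil k\rceil+1]$ shows this is $C^{-(1+o(1))k}=d^{-\Theta(\log C)}$, which for large fixed $C$ and large $d$ exceeds the claimed $d^{-\Theta(C)}$. So the inequality as stated cannot be proved because it is false in that regime.

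It is worth knowing that the paper's own proof trips at the same point: it invokes a Chernoff bound of the form $\exp\left(-(C-1)^2k/(4C)\right)$, which is the sub-Gaussian (small relative deviation) form and is not valid when the threshold $k$ sits a factor $C$ above the mean $\le k/C$; the valid large-deviation form gives exactly your $(e/C)^k$. Pushed to its honest conclusion, your computation proves the lemma with exponent $\Omega(\log C)\cdot\log(d)$, i.e.\ a bound of the shape $d^{-\Omega(\log C)}\var[f]$, and that is the statement one should carry forward. This weaker form still suffices for Theorems \ref{theorem:approx_junta} and \ref{theorem:aa_for_restrictions}: there one only needs the failure probability to be an arbitrarily large constant power of $1/d$, which is recovered by choosing the constant in the survival probability exponentially large in the other constants (so that $\log C$ dominates them), at the cost of re-tuning the numerical constants throughout.
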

\begin{proof}

First suppose $(S, y \in \cube^{[n] \setminus S})$ is a fixed restriction. Note that for $z \in \cube^S,$

$$f_y(z) = \displaystyle \sum_{U \subseteq [n]} \hat{f}(U) \chi_U(y,z)$$
so for $T \subseteq [S], $ $\hat{f_y}(T) = \displaystyle \sum_{U \subseteq S} \hat{f}(U \cup T) \chi_U(y)$. By Parseval's theorem, for a fixed $S$,
$$ \E_y \left[ \displaystyle \hat{f_y} (T)^2 \right] = \displaystyle \sum_{U \subseteq [n] \setminus S} \hat{f}(U \cup T)^2 .$$
Therefore, for a fixed $S$,

$$ \E_{y} \left[ \displaystyle \sum_{|T| > k } \hat{f_y}(T)^2 \right] = \displaystyle \sum_{V \subseteq [n], |V \cap S| > k }  \hat{f}(V)^2.$$
Randomizing over $S$ again, 
$$ \E_{S,y} \left[\sum_{|T| > k } \hat{f_y}(T)^2 \right]= \displaystyle \sum_{V \subseteq [n]} \Pr [|V \cap S| > k] \hat{f}(V)^2.$$

Since $f$ has degree $d$, we only need to worry about the terms where $|V| \leq d$. Also, for $|V| \leq k$ the relevant probability is 0. Since each element is included in $S$ with probability $\dfrac{ \log(d)}{Cd}$, by Chernoff bound, for each $V$ with $|V| \leq d$, 
$$ \Pr \left[ |V \cap S| > k \right] \leq \exp ((C-1)^2k/4C) \leq \exp(-C \log(d)/8).$$
Thus we get that
$$ \E _{S,y} \left[ \displaystyle \sum_{|T| > k} \hat{f_y}(T)^2 \right] \leq \exp(-C \log(d)/8) \displaystyle \sum_{T \neq \phi} \hat{f}(T)^2 = \exp(-C \log(d)/8) \var[f].$$

\end{proof}
\subsection*{Random restrictions don't have low variance}
\begin{lemma}
\label{lemma:variance_of_random_restriction}
    Let $f:\cube^n \rightarrow \mathbb{R}$ be any function and let $\rho = (S, y \in \cube^{[n] \setminus S})$ be a random restriction with survival probability $p$. Then, $\E [\var[f_{\rho}]] \geq p \var[f].$
\end{lemma}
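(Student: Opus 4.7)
The plan is to expand the variance of $f_\rho$ in Fourier coefficients and then average over the random restriction, essentially reusing the Parseval-style bookkeeping from the proof of Lemma \ref{lemma:bounded_tail}.

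First I would fix the restriction $(S, y)$ and write
$$\var[f_\rho] \;=\; \displaystyle\sum_{\emptyset \neq T \subseteq S} \hat{f_y}(T)^2.$$
Next I would average over $y$ for a fixed $S$, using exactly the identity established in the proof of Lemma \ref{lemma:bounded_tail}, namely $\E_y[\hat{f_y}(T)^2] = \displaystyle\sum_{U \subseteq [n]\setminus S} \hat{f}(U \cup T)^2$. Grouping $V = U \cup T$ and noting that the pair $(U,T)$ with $T \subseteq S, U \subseteq [n]\setminus S, T \neq \emptyset$ corresponds bijectively to choosing a nonempty $V \subseteq [n]$ whose intersection with $S$ is nonempty, this gives, for fixed $S$,
$$\E_y[\var[f_\rho]] \;=\; \displaystyle\sum_{V : V \cap S \neq \emptyset} \hat{f}(V)^2.$$

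Now I would randomize over $S$ to obtain
$$\E_{S,y}[\var[f_\rho]] \;=\; \displaystyle\sum_{V \subseteq [n]} \Pr[V \cap S \neq \emptyset]\, \hat{f}(V)^2 \;=\; \displaystyle\sum_{V \neq \emptyset} \bigl(1-(1-p)^{|V|}\bigr)\, \hat{f}(V)^2,$$
since each coordinate is kept in $S$ independently with probability $p$ and the empty set contributes $0$.

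The only step that requires even a bit of thought is a very mild inequality: for every $|V| \geq 1$ and every $p \in [0,1]$, we have $1-(1-p)^{|V|} \geq p$, with equality at $|V|=1$. This follows because $(1-p)^{|V|} \leq (1-p)$ whenever $|V| \geq 1$ and $0 \leq 1-p \leq 1$. Plugging this in yields
$$\E[\var[f_\rho]] \;\geq\; p \displaystyle\sum_{V \neq \emptyset} \hat{f}(V)^2 \;=\; p\,\var[f],$$
which is exactly the claim. There is no serious obstacle here; the lemma is essentially a Parseval calculation combined with one line of elementary probability.
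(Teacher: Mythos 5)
Your proposal is correct and follows essentially the same route as the paper's own proof: a Parseval computation giving $\E_{S,y}[\var[f_\rho]] = \sum_{V \neq \emptyset}\bigl(1-(1-p)^{|V|}\bigr)\hat{f}(V)^2$, followed by the elementary bound $1-(1-p)^{|V|}\geq p$. No gaps; nothing further to add.
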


\begin{proof}
    Fix a restriction $(S, y \in \cube^{[n] \setminus S})$. For each $T \subseteq S$, $\hat{f_y}(T) = \displaystyle \sum_{U \subseteq [n] \setminus S} \hat{f}(T \cup U) \chi_U (y)$. Thus, by Parseval's theorem, for a fixed $S$, 
    $$ \E_y [\var[f_y]] = \displaystyle \sum_{T: T \cap S \neq \phi} \hat{f}(T)^2.$$
    Randomizing over $S$ again,
\begin{align*}
    \E_{S,y} [ \var[f_y]] & = \displaystyle \sum_{T \neq \phi} \Pr [T \cap S \neq \phi] \hat{f}(T)^2 \\
    & = \displaystyle \sum_{T \neq \phi} \left( 1 - (1-p)^{|T|} \right) \hat{f}(T)^2 \\
    & \geq \displaystyle \sum_{T \neq \phi} p \hat{f}(T)^2 \\
    & = p \var[f].
\end{align*}
\end{proof}

\subsection*{{Random restrictions with appropriate survival probability put large Fourier mass on the linear level}}

\begin{lemma}
\label{lemma:level_one}
    Let $f: \cube^n \rightarrow \mathbb{R}$, $J \subseteq [n]$ and $k$ be such that 
    $$ \displaystyle \sum_{2^k \leq |T \cap J^c| < 2^{k+1} } \hat{f}(T)^2 \geq \mu .$$
    Consider a random restriction $\rho = (S, y \in \cube^{[n] \setminus S}) $ where each $j \in J$ is fixed and given an uniformly random assignment, and each $i \in J^c $ is kept alive with probability $p=2^{-k}$. Then,
    $$ \E \left[ \displaystyle \sum_{i \in S} \hat{f}_y(\{i\})^2 \right] \geq \dfrac{\mu}{20}.$$
\end{lemma}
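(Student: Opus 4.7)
The plan is to compute $\E_{\rho}\bigl[\sum_{i \in S} \hat{f_y}(\{i\})^2\bigr]$ in closed form via Plancherel applied in two stages -- first over $y$ with $S$ fixed, then over $S$ -- and then to lower bound the resulting weighted Fourier sum by a routine binomial estimate. The key observation is that linear Fourier coefficients of a random restriction pick out precisely those subsets $T \subseteq [n]$ with $|T \cap S| = 1$, so the effective weight on $\hat{f}(T)^2$ is $\Pr[|T \cap S| = 1]$; this weight is $\Theta(1)$ exactly when $|T \cap J^c| \cdot p \asymp 1$, which is the regime singled out by the hypothesis.

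First I would fix a restriction $\rho = (S, y)$ and split every $U \subseteq [n]$ as $U_1 \sqcup U_2$ with $U_1 \subseteq S$ and $U_2 \subseteq [n] \setminus S$, obtaining
$$ \hat{f_y}(U_1) = \displaystyle \sum_{U_2 \subseteq [n] \setminus S} \hat{f}(U_1 \cup U_2)\, \chi_{U_2}(y). $$
Squaring and averaging over uniform $y$ gives, by orthogonality, $\E_y[\hat{f_y}(U_1)^2] = \sum_{U_2 \subseteq [n] \setminus S} \hat{f}(U_1 \cup U_2)^2$. Summing over singletons $U_1 = \{i\}$ for $i \in S$ repackages the right-hand side as $\sum_{T\,:\,|T \cap S| = 1} \hat{f}(T)^2$, since the singleton $U_1$ together with the disjoint $U_2 \subseteq [n] \setminus S$ parametrizes exactly those $T \subseteq [n]$ meeting $S$ in one element.

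Next I would randomize over $S$. Because every $j \in J$ is deterministically outside $S$ while each $i \in J^c$ is included independently with probability $p = 2^{-k}$, the variable $|T \cap S|$ is distributed as $\mathrm{Bin}(m, p)$ with $m := |T \cap J^c|$. Hence
$$ \E_{S, y} \left[ \displaystyle \sum_{i \in S} \hat{f_y}(\{i\})^2 \right] = \displaystyle \sum_T m\,p\,(1-p)^{m-1}\, \hat{f}(T)^2. $$
Discarding all but the $T$'s with $2^k \leq m < 2^{k+1}$, which by hypothesis carry Fourier mass $\geq \mu$, reduces the problem to showing that $m\,p\,(1-p)^{m-1} \geq 1/20$ uniformly in this range. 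This is elementary: $m p \in [1, 2)$ on the range, and for $k \geq 1$ one has $(1-p)^{m-1} \geq (1 - 2^{-k})^{2^{k+1}}$, which is at least $1/16$ -- this follows from the monotonicity of $(1 - 2^{-k})^{2^k}$ (increasing to $1/e$ as $k \to \infty$), together with a direct check for $k \in \{1,2\}$ as the base cases; the case $k = 0$ is immediate since then $|T \cap S| = 1$ deterministically for every relevant $T$. Multiplying the two estimates yields the claim.

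I do not anticipate a genuine obstacle here. The only point requiring mild care is the finite-$k$ verification of the binomial weight bound, since the Poisson asymptotic $\lambda e^{-\lambda}$ for $\lambda \in [1, 2)$ captures only the large-$k$ limit and is not directly applicable in the small-$k$ regime.
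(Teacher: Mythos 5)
Your proposal is correct and follows essentially the same route as the paper: the two-stage Parseval computation identifying $\E\bigl[\sum_{i\in S}\hat{f_y}(\{i\})^2\bigr]=\sum_T \Pr[|T\cap S|=1]\,\hat{f}(T)^2$, followed by the elementary lower bound $m\,p\,(1-p)^{m-1}\geq 1/20$ for $m=|T\cap J^c|\in[2^k,2^{k+1})$ and $p=2^{-k}$. Your version is in fact slightly more careful, since you correctly use $m=|T\cap J^c|$ (the paper loosely writes $|T|$ in the binomial weight, though only coordinates of $T\cap J^c$ can survive) and you spell out the small-$k$ base cases that the paper dismisses as standard.
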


\begin{proof}
    For a fixed $(S, y \in \cube^{[n] \setminus S}$ (note that $J \cap S = \phi$) and $j \in S$ we have
    $$ \hat{f_y}(\{j\}) = \displaystyle \sum_{T \subseteq [n], T \cap S = \{j\}} \hat{f}(T) \chi_{T \setminus \{j\}} (y).$$
    By Parseval's theorem, for a fixed $S$,
    $$ \E_y \left[ \hat{f_y}(\{j\})^2  \right] = \displaystyle \sum_{T \subseteq [n], T \cap S = \{j\}} \hat{f}(T)^2 .$$
    Randomizing over $S$,
    \begin{align*}
        \E \left[ \displaystyle \sum_{j \in S} \hat{f_y}(\{j\})^2 \right] & = \displaystyle \sum_{T \subseteq [n]} \left[ \displaystyle \sum_{j \in [n]} \Pr [T \cap S= \{j\}] \right] \hat{f}(T)^2 \\
        & \geq \displaystyle \sum_{2^k \leq |T \cap J^c| < 2^{k+1}} |T|p (1-p)^{|T|-1} \hat{f}(T)^2.
    \end{align*}
    By standard inequalities, for $n \in [1/p, 2/p),$ $np (1-p)^{n-1} \geq 1/20$. It follows that
    \begin{align*}
\E_y \left[ \hat{f_y}(\{j\})^2  \right] & \geq \dfrac{\mu}{20}   .     
    \end{align*}

\end{proof}

\subsection*{Some hypercontractive inequalities for low degree functions}

The proof of these lemmas can be found in \cite{DFKO06}.

\begin{lemma}
\label{lemma:BGB_biased}
    There exists a universal constant $W>0$ such that the following holds: \newline
    
    Let $f: \cube^{n} \rightarrow \mathbb{R}$ be a degree $k$ function. Let $\E [f^2]= \sigma^2$. Then,
    $$ \E \left[ f(z)^2 \mathsf{1}_{f(z)^2 \leq W^k \sigma^2 } \right] \geq \dfrac{1}{2} \E [f(z)^2].$$
\end{lemma}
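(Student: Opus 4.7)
The plan is to derive the lemma from the standard Bonami--Beckner hypercontractive inequality, which controls the fourth moment of any low-degree polynomial on the cube, and then convert that fourth-moment bound into a tail bound for $f^2$ via Markov and Cauchy--Schwarz. Concretely, $(2,4)$-hypercontractivity (see, e.g., Chapter 9 of O'Donnell's textbook \cite{textbook}) gives $\|f\|_4 \leq 3^{k/2}\|f\|_2$ for any degree-$k$ polynomial $f:\cube^n \to \mathbb{R}$, which rearranges to $\E[f^4] \leq 9^k \sigma^4$.

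Since the conclusion of the lemma is equivalent to $\E\bigl[f(z)^2 \mathsf{1}_{f(z)^2 > W^k \sigma^2}\bigr] \leq \tfrac{1}{2}\sigma^2$, I would bound the left-hand side in two steps. First, Markov applied to $f^4$ gives
$$\Pr\bigl[f(z)^2 > W^k \sigma^2\bigr] \;=\; \Pr\bigl[f(z)^4 > W^{2k}\sigma^4\bigr] \;\leq\; \frac{\E[f^4]}{W^{2k}\sigma^4} \;\leq\; \left(\frac{9}{W^2}\right)^{k}.$$
Then Cauchy--Schwarz combined with the same fourth-moment bound yields
$$\E\bigl[f^2 \mathsf{1}_{f^2 > W^k\sigma^2}\bigr] \;\leq\; \sqrt{\E[f^4]}\cdot\sqrt{\Pr[f^2 > W^k\sigma^2]} \;\leq\; 3^k\sigma^2 \cdot \left(\frac{3}{W}\right)^k \;=\; \left(\frac{9}{W}\right)^k \sigma^2.$$
Choosing any universal constant $W \geq 18$ makes $(9/W)^k \leq (1/2)^k \leq 1/2$ for every $k \geq 1$, which yields $\E\bigl[f^2 \mathsf{1}_{f^2 > W^k\sigma^2}\bigr] \leq \tfrac{1}{2}\sigma^2$, and hence the desired lower bound on the truncated second moment.

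There is no real obstacle here: the argument is essentially two lines beyond the hypercontractive inequality. The only things to be careful about are (i) picking $W$ large enough that $(9/W)^k \leq 1/2$ holds uniformly in $k$ (any $W \geq 18$ works), and (ii) the direction of hypercontractivity being used, namely the $\|\cdot\|_4 \leq 3^{k/2}\|\cdot\|_2$ bound for degree-$k$ polynomials, which is the form one gets from Bonami--Beckner specialized to $q=4$ on the uniform cube.
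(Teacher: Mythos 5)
Your proof is correct: the $(2,4)$-hypercontractive bound $\E[f^4]\leq 9^k\sigma^4$, Markov on $f^4$, and Cauchy--Schwarz give $\E\bigl[f^2\mathsf{1}_{f^2>W^k\sigma^2}\bigr]\leq (9/W)^k\sigma^2\leq \sigma^2/2$ for $W\geq 18$ (and $k\geq 1$; the $k=0$ case is trivial), which is exactly the claimed truncated-moment bound. The paper itself gives no argument, merely citing Corollary 2.4 of \cite{dinur}, and the proof behind that citation is this same fourth-moment hypercontractivity argument, so your route is essentially the standard one.
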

\begin{proof}
    Corollary 2.4 in \cite{DFKO06}.
\end{proof}
\begin{lemma}
\label{lemma:hypercontractivity}   
    There exists a universal constant $B>0$ such that the following holds: \newline
    
    Let $f: \cube^n \rightarrow \mathbb{R}$ be a degree $k$ function. Let $\rho \in [-1/2, 1/2]$ be a noise parameter, and $x_0 \in \cube^n$. Suppose $\E_{z \leftarrow N_{\rho}(x)} [f(z) -  f(x_0)] = \mu \geq 0$. Then,
    $$ \Pr_{z \leftarrow N_{\rho}} \left[ f(z) - f(x_0)  \geq \mu \right] \geq \dfrac{1}{B^k}.$$
\end{lemma}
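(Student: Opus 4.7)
The plan is to change variables so the expectation becomes an average over a product of independent biased Rademachers, center the resulting polynomial, and then combine a Paley--Zygmund-type inequality with hypercontractivity for biased product measures.

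First I would substitute $w_i = z_i (x_0)_i$, so that under $z \leftarrow N_\rho(x_0)$ the coordinates $w_1, \dots, w_n$ become independent $\rho$-biased Rademachers (each equal to $+1$ with probability $(1+\rho)/2$). Writing $g(w) := f(z) - f(x_0)$ expresses $g$ as a degree-$k$ polynomial in $w$ with $\E[g] = \mu$, and the target inequality becomes $\Pr[g \geq \mu] \geq 1/B^k$. I would then center: set $h := g - \mu$, a mean-zero degree-$k$ polynomial in the biased variables. If $h \equiv 0$ the claim is trivial, so assume otherwise.

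Since $\E[h] = 0$, the positive and negative parts of $h$ have equal expectation, namely $\E[|h|]/2$. Cauchy--Schwarz applied to $\E[h \mathbf{1}\{h \geq 0\}] = \E[|h|]/2$ gives
\[
\tfrac{1}{2}\E[|h|] \;=\; \E[h\,\mathbf{1}\{h\geq 0\}] \;\leq\; \|h\|_2\sqrt{\Pr[h\geq 0]},
\]
so
\[
\Pr[h \geq 0] \;\geq\; \frac{\|h\|_1^{\,2}}{4\,\|h\|_2^{\,2}}.
\]
Thus everything reduces to showing $\|h\|_2 \leq C^k \|h\|_1$ for some universal constant $C$ on degree-$k$ polynomials under the $\rho$-biased product measure.

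For this last step I would invoke Bonami--Beckner hypercontractivity in the biased setting. Because $|\rho|\leq 1/2$, every coordinate puts mass at least $1/4$ on each of $\pm 1$, which yields a dimension-free lower bound on the log-Sobolev constant and hence a Bonami--Beckner inequality $\|h\|_2 \leq C^{k/2}\|h\|_{4/3}$ for degree-$k$ polynomials with a universal $C$. Combined with the standard interpolation bound $\|h\|_{4/3} \leq \|h\|_1^{1/2}\|h\|_2^{1/2}$, this gives $\|h\|_2 \leq C^k \|h\|_1$, and therefore $\Pr[h \geq 0] \geq 1/(4 C^{2k}) \geq 1/B^k$ for $B := 4C^2$.

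The main obstacle I expect is obtaining a hypercontractivity constant uniform in $\rho \in [-1/2,1/2]$: the familiar constant $((p-1)/(q-1))^{1/2}$ is usually stated for the uniform product measure, and to carry it over to the $\rho$-biased setting one has to cite an Oleszkiewicz-type generalization for product measures whose atoms lie in $[1/4,3/4]$, or equivalently reprove it through the log-Sobolev route. The restriction $|\rho| \leq 1/2$ is exactly what keeps every atom mass $\geq 1/4$, ensuring a dimension-free constant; granted this, the Paley--Zygmund plus interpolation steps are entirely routine.
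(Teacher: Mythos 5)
Your proposal is correct: the change of variables to $\rho$-biased bits, the mean-zero exceedance bound $\Pr[h\ge 0]\ \ge\ \|h\|_1^2/(4\|h\|_2^2)$ obtained from $\E[h\,\mathbf{1}\{h\ge 0\}]=\E[|h|]/2$ plus Cauchy--Schwarz, and the norm comparison $\|h\|_2\le C^k\|h\|_1$ via $(4/3,2)$-hypercontractivity for product measures whose atoms all have mass at least $1/4$ (which is exactly what $|\rho|\le 1/2$ guarantees) combine to give the stated bound with $B=4C^2$, and the biased hypercontractive inequality you invoke is indeed a standard dimension-free result (e.g.\ O'Donnell's book, Chapter 10), so that citation is legitimate. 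The paper itself gives no argument here --- it just cites Lemma 2.5 of Dinur--Friedgut--Kindler--O'Donnell --- and your argument is essentially the standard hypercontractive $L_1$--$L_2$ proof behind that cited lemma, so it matches the intended route.
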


\begin{proof}
    Lemma 2.5 in \cite{DFKO06}
\end{proof}

\subsection*{The noise lemma} This is the main result of \cite{DFKO06}. We use a slight variant. First we recall some known results from approximation theory.

\begin{lemma}
\label{lemma:interpolation}
    For any $k$, there exist constants $\rho_1, \rho_2, \cdots , \rho_{k+1} \in [-1/2, 1/2]$ with the following property: for any polynomial of degree $k$, $p(x) = a_0 + a_1 x + \cdots + a_k x^k$, there exists a $j \in [k+1]$ such that $|p(\rho_j)| \geq \dfrac{|a_1|}{2(k+1)}.$
\end{lemma}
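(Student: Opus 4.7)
The plan is a Lagrange interpolation argument at $k+1$ well-chosen nodes in $[-1/2,1/2]$, expressing the coefficient $a_1 = p'(0)$ as a bounded linear combination of values of $p$ at those nodes. For any distinct choice of $\rho_1,\ldots,\rho_{k+1}$, the Lagrange representation
\begin{equation*}
p(x) \;=\; \sum_{j=1}^{k+1} p(\rho_j)\, \ell_j(x), \qquad \ell_j(x) = \prod_{i \neq j} \frac{x-\rho_i}{\rho_j-\rho_i},
\end{equation*}
holds identically, and differentiating at $x=0$ yields the identity
\begin{equation*}
a_1 \;=\; \sum_{j=1}^{k+1} c_j\, p(\rho_j), \qquad c_j := \ell_j'(0).
\end{equation*}

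For the concrete choice of nodes I would take the rescaled Chebyshev nodes $\rho_j = \tfrac{1}{2}\cos\tfrac{(2j-1)\pi}{2(k+1)}$, which are symmetric about $0$. The first step is to establish the uniform bound $|c_j| \leq 2$; this is a classical estimate stemming from Bernstein's inequality, which bounds the derivative of a degree-$k$ polynomial at an interior point of $[-1/2,1/2]$ by $O(k)$ times its sup-norm, and applies directly to each Lagrange basis polynomial $\ell_j$ (which takes value $1$ at $\rho_j$ and $0$ at the remaining nodes).

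The second step is an averaging argument. Since $|c_j|\leq 2$, the identity $a_1 = \sum_j c_j p(\rho_j)$ forces some index $j$ with $c_j\,p(\rho_j) \geq a_1/(k+1)$, and dividing by $c_j$ (assuming it is positive) gives $p(\rho_j) \geq a_1/(2(k+1))$. The $\rho \leftrightarrow -\rho$ symmetry of the Chebyshev nodes, which induces $c_{j'} = -c_j$ on symmetric partners $\rho_{j'} = -\rho_j$, lets me repackage the identity as
\begin{equation*}
a_1 \;=\; \sum_{\rho_j > 0} c_j\, \bigl(p(\rho_j) - p(-\rho_j)\bigr),
\end{equation*}
a sum of at most $(k+1)/2$ terms from which the same averaging, together with a possible relabeling $\rho_j \mapsto -\rho_j$ to make the chosen coefficient positive, recovers the claimed bound.

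The main obstacle I expect is precisely the sign control in the averaging step. Because constant polynomials have $a_1=0$, the identity forces $\sum_j c_j = 0$, so the weights $c_j$ necessarily have mixed signs; a naive pigeonhole on $c_j p(\rho_j) \geq a_1/(k+1)$ only delivers a two-sided estimate $|p(\rho_j)| \geq a_1/(2(k+1))$ unless one arranges that the maximizing index has $c_j > 0$. Making this rigorous is where the symmetry of the Chebyshev nodes and the paired form above are essential: they convert the mixed-sign identity into a sum of odd differences $p(\rho_j) - p(-\rho_j)$ for which a one-sided bound on a single term translates into a one-sided bound on $p(\rho_j)$ at a suitably chosen node.
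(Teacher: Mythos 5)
Your argument breaks at both of its load-bearing steps. First, the quantitative claim $|c_j|\le 2$ for the Lagrange weights $c_j=\ell_j'(0)$ at rescaled Chebyshev nodes is false. Writing the nodes of $T_{k+1}$ as $\cos\theta_j$, one computes $\ell_j(x)=\tfrac{T_{k+1}(x)}{T_{k+1}'(\cos\theta_j)(x-\cos\theta_j)}$ and hence (for $k+1$ even) $|\ell_j'(0)|=\tfrac{\sin\theta_j}{(k+1)\cos^2\theta_j}$, so the nodes nearest the origin have weight of order $k$, not $O(1)$; already for $k=3$ the rescaled weight is about $3.2>2$, and in general $|c_j|=\Theta(k)$. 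Bernstein's inequality cannot give what you want: applied to $\ell_j$ it yields $|\ell_j'(0)|\le 2k\,\|\ell_j\|_\infty$ with $\|\ell_j\|_\infty\ge \ell_j(\rho_j)=1$, i.e.\ only an $O(k)$ bound, which is the true order. What one can salvage is a bound on $\sum_j|c_j|$ (of order $k$ for a good node choice), and by duality that gives the \emph{two-sided} conclusion $\max_j|p(\rho_j)|\ge |a_1|/O(k)$ — but not via "divide by $c_j\le 2$."

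Second, the sign-repair step that you yourself flag as the main obstacle cannot be closed, because the one-sided statement being proved is false as printed: take $k=2$ and $p(x)=x-Mx^2$, so $a_1=1$ and the required threshold is $1/6$, yet for any fixed nodes and $M$ large enough $p(\rho_j)\le 0$ at every node (with equality only at a node equal to $0$). This also kills the symmetric repackaging: in this example the odd part is just $x$, so the paired differences $p(\rho_j)-p(-\rho_j)=2\rho_j$ behave exactly as your identity predicts, while both $p(\rho_j)$ and $p(-\rho_j)$ are hugely negative; a lower bound on a difference never yields a one-sided bound on either value. Note the paper offers no argument to compare against — its proof is a citation to Rivlin — and the classical fact living there (coefficient bounds from the values at the $k+1$ Chebyshev extreme points, of V.~A.~Markov/Duffin--Schaeffer type) is the two-sided inequality $\max_j|p(\rho_j)|\ge \tfrac{|a_1|}{2(k+1)}$. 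That two-sided form is what your Lagrange-duality computation can actually prove (pigeonhole against $\sum_j|c_j|$, not against individual weights), and it is also what the downstream applications need, since Theorem 6.3 only uses absolute differences $|f(y,z)-f(y,\tilde z^{(i)})|$; so the right fix is to restate the lemma with absolute values rather than to try to force positivity through node symmetry.
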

\begin{proof}
    Page 112 in \cite{Rivlin}.
\end{proof}

Now we state the lemma.

\begin{lemma}
\label{lemma:noise_lemma}
    There exists a universal constant $B>0$ such that the following holds: \newline
    
    Consider a degree $k$ polynomial $f: \cube^n \rightarrow \mathbb{R}$. Let $S \subseteq [n]$ and $\ell(x) = \displaystyle \sum_{i \in S} \hat{f}(\{i\}) x_i$. Consider an input $x_0 \in \cube^n$ such that $\ell(x_0) \geq \gamma$. Sample a $z \leftarrow \cube^n$ by the following procedure:
    \begin{enumerate}
        \item Sample $\rho \leftarrow \{\rho_1, \cdots , \rho_{k+1} \}$ uniformly at random.
        \item Sample $z \leftarrow N_{\rho, S} (x_0)$.
    \end{enumerate}
    Then,
    $$ \Pr \left[ \left| f(z) - f(x_0) \right| \geq \dfrac{\gamma}{2(k+1)} \right] \geq \dfrac{1}{(k+1)B^k}.$$
    
\end{lemma}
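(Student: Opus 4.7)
The plan is to combine Lemma \ref{lemma:interpolation} (to pick a favorable noise rate from the finite list $\{\rho_1,\dots,\rho_{k+1}\}$) with Lemma \ref{lemma:hypercontractivity} (to upgrade an expectation lower bound into a pointwise lower bound with probability $\geq 1/B^k$). The key observation is that the map $\rho \mapsto \E_{z\leftarrow N_{\rho,S}(x_0)}[f(z)]$ is a polynomial in $\rho$ of degree $\leq k$ whose coefficient of $\rho^1$ is exactly $\ell(x_0)$.

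First I would compute this polynomial explicitly. Using $\E_{z \leftarrow N_{\rho,S}(x_0)}[\chi_T(z)] = \rho^{|T\cap S|}\chi_T(x_0)$ and the Fourier expansion of $f$,
$$ p(\rho) \;:=\; \E_{z \leftarrow N_{\rho,S}(x_0)}[f(z)] - f(x_0) \;=\; \sum_{T\,:\,T\cap S \neq \emptyset} \bigl(\rho^{|T\cap S|}-1\bigr)\hat f(T)\chi_T(x_0). $$
This is a polynomial in $\rho$ of degree $\leq k$ with vanishing constant term; its $\rho^1$-coefficient is $\sum_{T\,:\,|T\cap S|=1}\hat f(T)\chi_T(x_0)$, which reduces to $\ell(x_0)\geq\gamma$ in the intended application of the lemma (where $f$ is already a function on $\cube^S$, so its Fourier support is confined to subsets of $S$ and the only sets with $|T\cap S|=1$ are the singletons $\{i\}$ with $i\in S$).

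Next, applying Lemma \ref{lemma:interpolation} to $p(\rho)$ produces an index $j^*\in[k+1]$ with
$$ p(\rho_{j^*}) \;=\; \E_{z \leftarrow N_{\rho_{j^*},S}(x_0)}[f(z) - f(x_0)] \;\geq\; \frac{\gamma}{2(k+1)}, $$
and the event $\rho = \rho_{j^*}$ in step (1) of the sampling procedure has probability $1/(k+1)$. Condition on this event and view $f$ as a degree $\leq k$ function $\tilde f$ on $\cube^S$ obtained by freezing the $S^c$-bits to the values of $x_0$; sampling $z \leftarrow N_{\rho_{j^*},S}(x_0)$ then becomes sampling $w \leftarrow N_{\rho_{j^*}}(x_0|_S)$ on $\cube^S$ with $f(z) = \tilde f(w)$. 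Since $\rho_{j^*}\in[-1/2,1/2]$, Lemma \ref{lemma:hypercontractivity} applied to $\tilde f$ gives
$$ \Pr\left[f(z)-f(x_0) \geq \frac{\gamma}{2(k+1)}\right] \;\geq\; \frac{1}{B^k}. $$
Multiplying by the $1/(k+1)$ probability of drawing $\rho_{j^*}$ yields the claimed bound $\frac{1}{(k+1)B^k}$. The only delicate step is the first one — verifying that the linear coefficient of $p$ really is $\ell(x_0)$ rather than an augmented quantity — after which everything else is a direct composition of the two prior lemmas.
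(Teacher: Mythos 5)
Your proposal is correct and follows essentially the same route as the paper: expand $\rho \mapsto \E_{z \leftarrow N_{\rho,S}(x_0)}[f(z)] - f(x_0)$ as a degree-$k$ polynomial in $\rho$, use Lemma \ref{lemma:interpolation} to find a $\rho_{j^*}$ at which it is at least $\gamma/(2(k+1))$, upgrade that expectation bound to a probability bound of $B^{-k}$ via Lemma \ref{lemma:hypercontractivity} (the paper cites Lemma \ref{lemma:BGB_biased} for this step and for the constant $B$, which appears to be a slip), and pay the factor $1/(k+1)$ for drawing $\rho_{j^*}$ in step (1). Your explicit caveat --- that the $\rho$-linear coefficient is $\sum_{T:\,|T\cap S|=1}\hat f(T)\chi_T(x_0)$ and coincides with $\ell(x_0)$ only when the Fourier support of $f$ lies inside $S$ --- is precisely what the paper compresses into ``by replacing $f$ with an appropriate restriction we can assume $S=[n]$'', so your write-up proves the same statement the paper's argument actually establishes while being more precise about the hypothesis under which that reduction is legitimate (a point that genuinely matters, since for general $S \subsetneq [n]$ restricting changes the level-one coefficients).
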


\begin{remark}
    Observe that $z$ differs from $x$ only in the coordinates of $S$. This will be crucial later on.
    
\end{remark}

\begin{proof}
    Take $B$ to be the same universal constant as in Lemma \ref{lemma:BGB_biased}.
    By replacing $f$ with an appropriate restriction if necessary, we can assume $S = [n]$. Consider the polynomial $p(\rho) = T_{\rho}f(x_0) - f(x_0)$. From the Fourier expansion of noise operator, we see that
    $$ p(\rho) = \displaystyle \sum_{S \neq \phi} \rho^{|S|} \hat{f}(S).$$
    This is a degree $k$ polynomial in $\rho$ with linear coefficient $l(x_0)$. By Lemma \ref{lemma:interpolation}, there exists a $h \in [k+1]$ such that $p(\rho_h) \geq \gamma/(2k+2)$. By Lemma \ref{lemma:BGB_biased}, 
    $$ \Pr_{z \leftarrow N_{\rho}(x_0)} \left[\left|f(z) - f(x_0)\right| \geq \dfrac{\gamma}{2(k+1)} \big{|} \rho= \rho_h \right] \geq \dfrac{1}{B^k}.$$
    We choose $\rho= \rho_h$ in step (1) with probability $1/(k+1)$, so
    $$ \Pr_{z \leftarrow N_{\rho}(x_0)} \left[\left|f(z) - f(x_0)\right| \geq \dfrac{\gamma}{2(k+1)} \right] \geq \dfrac{1}{(k+1)B^k}.$$    
\end{proof}

\subsection*{Partitioning a set of numbers in a balanced manner}
We need an easy lemma about partitioning a set of weights none of which is too large into disjoint buckets where each bucket gets roughly the same total weight. We will later use this lemma on the set of small linear Fourier coefficients of a function.
\begin{lemma}
\label{lemma:balanced_partition}
    Let $a_1, a_2, \cdots , a_n$ be a set of non-negative real numbers and $1 \leq L \leq n$. Suppose $a_i \leq \dfrac{a_1 + a_2 + \cdots + a_n}{2L}$ for all $1 \leq i \leq n$. Then, there exists a partition $(B_1, B_2, \cdots , B_L)$ of $[n]$ such that for all $1 \leq j \leq L$,

    $$ \displaystyle \sum_{i \in B_j} a_i \geq \dfrac{a_1 + \cdots + a_n}{2L}.$$
\end{lemma}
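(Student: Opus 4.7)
The plan is to prove this via a straightforward greedy algorithm: sweep through the indices in arbitrary order, filling one bucket at a time until it crosses the threshold $S/(2L)$, where $S = a_1 + \cdots + a_n$, and then dump everything that remains into the final bucket $B_L$.

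More precisely, I would argue as follows. Set $S := a_1 + \cdots + a_n$. Fix any ordering of $[n]$; for $j = 1, 2, \ldots, L-1$ in turn, keep appending the next unused index to $B_j$ until $\sum_{i \in B_j} a_i \geq S/(2L)$ for the first time, then move on to $B_{j+1}$. Let $B_L$ consist of whatever indices remain after this process. By construction, each of $B_1, \ldots, B_{L-1}$ satisfies the desired lower bound (otherwise we would not have moved on). The key observation is an upper bound: when we finish bucket $B_j$, its sum is at most $S/(2L) + \max_i a_i \leq S/(2L) + S/(2L) = S/L$, using the hypothesis $a_i \leq S/(2L)$ on the last element added. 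Hence after processing the first $L-1$ buckets, the total weight placed is at most $(L-1) S/L$, so the residual weight going into $B_L$ is at least $S - (L-1)S/L = S/L \geq S/(2L)$, as required.

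A corner case to handle: the greedy step for bucket $j$ could in principle run out of indices before crossing the threshold. But this can only happen while constructing the very last nonempty bucket, and at that moment all remaining weight has been placed in that bucket. One just has to check that the total weight still available is at least $S/(2L)$ at the start of each step $j \leq L-1$; since at most $(j-1) S/L$ has been used so far, at least $S - (j-1)S/L \geq S/L \geq S/(2L)$ remains, so the greedy step will indeed cross the threshold before running out.

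I do not foresee a real obstacle: the only things to be careful about are (i) verifying the $S/L$ upper bound on each filled bucket, which uses the assumption $a_i \leq S/(2L)$ in exactly one place, and (ii) the bookkeeping showing there is always enough residual weight to continue. Both are one-line computations, so the whole proof should fit in a short paragraph.
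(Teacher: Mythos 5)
Your greedy construction is correct, and it takes a genuinely different route from the paper's proof. The paper starts from an arbitrary partition and runs a local refinement: while some block has weight below $\frac{a_1+\cdots+a_n}{2L}$, an averaging argument yields a donor block of weight at least $\frac{a_1+\cdots+a_n}{L}$, a nonzero element is moved from the donor to the deficient block (the donor stays above the threshold precisely because each $a_i\le\frac{a_1+\cdots+a_n}{2L}$), and termination is argued via a potential function tracking the total deficit. You instead build the partition in one pass, and your key invariant --- each completed block has weight in the window $\left[\frac{S}{2L},\frac{S}{L}\right]$, the upper end being the one place the hypothesis on the $a_i$ enters --- replaces the termination argument by direct bookkeeping: at most $\frac{(L-1)S}{L}$ weight is spent on the first $L-1$ blocks, so the last block receives at least $\frac{S}{L}\ge\frac{S}{2L}$. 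The same cap also settles your corner case: at the start of step $j\le L-1$ at least $\frac{2S}{L}$ weight is unspent, and since the block stops as soon as it reaches $\frac{S}{2L}$ (hence at weight at most $\frac{S}{L}$), strictly positive weight, and therefore unused indices, always remain for the later blocks; the degenerate case $S=0$ is trivial. Your argument is, if anything, shorter, since it avoids the slightly delicate termination/potential step, while the paper's refinement view buys only the cosmetic ability to repair an arbitrary starting partition; both yield the same bounds.
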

\begin{proof}
Start with an arbitrary partition $(B_1, B_2, \cdots , B_L)$. Then, refine it iteratively according to the following algorithm. \newline
    
\fbox{\parbox{\textwidth}{\textit{Refinement algorithm:}
\begin{enumerate}

\item Locate a $j$ such that the condition is violated for $j$, i.e., 
$$ \displaystyle \sum_{i \in B_j} a_i < \dfrac{a_1 + \cdots + a_n}{2L}.$$
If no such $j$ exists, terminate.
\item Locate a $k$ such that
$$ \displaystyle \sum_{i \in B_k} a_i \geq \dfrac{a_1 + \cdots + a_n}{L}.$$
\item Take an arbitrary $l \in B_k$ such that $a_l \neq 0$ and place it in $B_j$;
$$B_k \leftarrow B_k \setminus \{l\}$$
$$B_j \leftarrow B_j \cup \{l\} $$
\end{enumerate}}}
An appropriate $k$ always exists in step (2) by an averaging argument. Since $a_l \leq \dfrac{a_1 + \cdots + a_n}{2L}$, the size of $B_k$ does not go below $\dfrac{a_1 + \cdots + a_n}{2L}$ after step (3). It is easy to see this procedure must terminate. Formally, notice that the quantity
$$ \displaystyle \sum_{j \in [L]} \min \left( \dfrac{a_1 + \cdots + a_n}{2L} - \displaystyle \sum_{i \in B_j} a_i, 0 \right)$$
reduces by $\min \{ a_i | a_i \neq 0 \}/2L$ at each step, so at some point of time it must be 0 at which point the algorithm terminates and returns a valid partition.

\end{proof}

\section{Main results}

\subsection{Improved tail bound for low degree functions}
This section is the core technical part of our work: we show that if we have a function $f: \cube^n \rightarrow \mathbb{R}$ \textit{(not necessarily bounded)} with $\E[f^2] \leq 1$ which cannot be approximated by juntas, then $f$ cannot be well-approximated by bounded low-degree functions. \newline

For a subset $J \subseteq [n]$, consider the junta $u: \cube^n \rightarrow \mathbb{R}$ which reads the coordinates of $J$ and outputs the average over the unqueried coordinates. It is easy to see that 
$u(x) = \displaystyle \sum_{S \subseteq J} \hat{f}(S) \chi_S(x),$ so $ ||u-f||_2^2 = \displaystyle \sum_{S \not \subseteq J} \hat{f}(S)^2.$ Thus, $u$ approximates $f$ if and only if $\displaystyle \sum_{S \not \subseteq J} \hat{f}(S)^2$ is small. 

\begin{remark}
\label{remark:junta_approximation}
    In fact, it is easy to see that there exists a junta $u$ depending only on coordinates of $J$ such that $||f-u||_2^2 \leq \epsilon$ if and only if $\displaystyle \sum_{S \not \subseteq J} \hat{f}(S)^2 \leq \epsilon$. This immediately follows from the Fourier expansion of $f-u$.
\end{remark}

\begin{theorem}
\label{theorem:main_result}

There exists a constant $C$ such that the following holds: \newline
Let $f: \cube^n \rightarrow \mathbb{R}$ be a degree $k$ function \textit{(not necessarily bounded)} with $\E[f^2] \leq 1$. Let $J = \{j | \infl_j[f] \geq \theta \}$ where $\theta = \dfrac{\mu^2}{C^k d^C}$. If $\displaystyle \sum_{S \not \subseteq J} \hat{f}(S)^2 \geq \mu$, then for any degree $d$ function $g: \cube^n \rightarrow [0,1]$, 
$\E [(f(x)-g(x))^2] \geq \delta = \dfrac{\mu}{C^k d^C}$.
\end{theorem}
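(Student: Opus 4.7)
The overall plan follows the Dinur-Friedgut blueprint \cite{dinur} used to prove Theorem \ref{thm:dinur_original}, but replaces the final step ``$g$ cannot be bounded'' by ``$g$ cannot be a bounded, degree-$d$ polynomial,'' exploiting the block sensitivity bound $\mathrm{bs}(g) \leq 6d^2$ from Theorem \ref{thm:block_sensitivity}. Concretely, assuming for contradiction that some degree-$d$, $[0,1]$-valued $g$ satisfies $\|f-g\|_2^2 < \delta$, I will produce an input $x_0$, disjoint blocks $B_1,\ldots,B_L$, and neighbours $z_j$ (agreeing with $x_0$ off $B_j$) for $\Omega(L)$ indices $j$ with $f(z_j)-f(x_0) \geq \Omega(\sigma/(k\sqrt{L}))$. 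Since $f$ and $g$ are close in $L^2$, this forces $\sum_j |g(z_j)-g(x_0)| \gg 6d^2$, contradicting block sensitivity once $L$ is of order $C^k d^4/\sigma^2$.

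First I would boost the Fourier tail to the linear level via a random restriction. A dyadic decomposition of $\sum_{S \not\subseteq J}\hat{f}(S)^2 \geq \mu$ by $|S \cap J^c|$ yields, since $\deg(f) \leq k$, some dyadic level $2^\ell \leq |S \cap J^c| < 2^{\ell+1}$ carrying weight $\geq \mu/\log(k+1)$. Invoking Lemma \ref{lemma:level_one} with $p = 2^{-\ell}$, freezing the coordinates of $J$ uniformly and keeping those of $J^c$ alive with probability $p$, gives $\E_\rho\!\left[\sum_{i \in S_\rho}\hat{f_\rho}(\{i\})^2\right] \geq \mu/(20\log(k+1))$. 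Lemma \ref{lemma:reverse-markov} then picks a non-negligible fraction of $\rho$'s on which simultaneously the linear Fourier mass $\sigma^2 := \sum_{i \in S_\rho} \hat{f_\rho}(\{i\})^2$ of $f_\rho$ is $\Omega(\mu/\log k)$ and $\|f_\rho - g_\rho\|_2^2 = O(\delta)$. Crucially, $S_\rho \subseteq J^c$, so for each alive $i$ one has $\E_\rho[\hat{f_\rho}(\{i\})^2] \leq \mathsf{Inf}_i[f] < \theta$, which lets me arrange that all but a small number of these coefficients lie below a threshold $\theta' = \mathrm{poly}(\theta,1/\mu)$; the few exceptions (at most $\sigma^2/\theta'$ of them, by a sum-of-squares argument) can be appended to the frozen coordinates without disturbing the rest of the argument.

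Now I apply Lemma \ref{lemma:balanced_partition} to the squared linear coefficients to split the alive coordinates into $L$ blocks $B_1,\ldots,B_L$ each of linear mass $\geq \sigma^2/(2L)$; this requires $L \lesssim \sigma^2/\theta'$. For uniform $x_0$, anticoncentration (Lemma \ref{lemma:anticoncentration}) applied independently per block with a constant threshold $t$ makes each block ``good'' with probability $\geq \exp(-Kt^2)$, and independence across blocks plus Chernoff yields $\Omega(L)$ good blocks with high probability. For each good $B_j$, Lemma \ref{lemma:noise_lemma} with $S = B_j$ and $\gamma = t\sigma/\sqrt{2L}$ gives, with probability $\geq 1/((k+1)B^k)$ over the independent noise, a point $z_j$ agreeing with $x_0$ off $B_j$ and satisfying $f_\rho(z_j) - f_\rho(x_0) \geq t\sigma/(2(k+1)\sqrt{2L})$. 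Since the $B_j$ are disjoint, $z_j = x_0^{(\tilde B_j)}$ for disjoint $\tilde B_j \subseteq B_j$, so $\sum_j |g_\rho(z_j)-g_\rho(x_0)| \leq \mathrm{bs}(g_\rho) \leq 6d^2$ unconditionally. On the other hand, Lemma \ref{lemma:BGB_biased} (tail truncation for low-degree functions) lets me transfer $\|f_\rho - g_\rho\|_2 = O(\sqrt{\delta})$ into pointwise closeness on the tuple $(x_0,z_1,\ldots,z_L)$ for a positive-probability set of inputs, so on this event the left side is at least $\Omega(\sqrt{L}\, t \sigma /((k+1)^2 B^k)) - O(\sqrt{\delta L})$. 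Setting $L \asymp C^k d^4/\sigma^2$ pushes this above $6d^2$ and delivers the contradiction.

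The main obstacle is the parameter bookkeeping: the universal constant $C$ hidden in $\theta = \mu^2/(C^k d^C)$ and $\delta = \mu/(C^k d^C)$ must simultaneously accommodate (i) $L \gtrsim C^k d^4/\sigma^2$ from the block-sensitivity comparison, (ii) $L \lesssim \sigma^2/\theta'$ from the partition lemma, and (iii) the $L^2$-to-pointwise slack $O(\sqrt{\delta L})$ from hypercontractive truncation staying comfortably below the per-block $f$-jump of order $\sigma/(k\sqrt{L})$. Plugging in $\sigma^2 \gtrsim \mu/\log k$, each of these reduces to an inequality a sufficiently large universal $C$ absorbs, together with the $k^{O(1)} B^{O(k)}$ and $W^{O(k)}$ losses from the hypercontractive and noise lemmas and the $O(\log k)$ loss from the dyadic decomposition. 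A secondary subtlety is the joint control of the ``noise'' randomness across many blocks together with the random choice of $x_0$; here one first conditions on $x_0$ lying in the Dinur-style typical set where $|f_\rho|$ is small and $|f_\rho-g_\rho|$ is controlled, then uses a union bound over the $L$ independent noise events so that a positive-probability event has $\Omega(L)$ blocks simultaneously witnessing the jump.
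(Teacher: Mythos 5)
Your overall route is the same as the paper's (dyadic selection of a level of $|S\cap J^c|$, Lemma \ref{lemma:level_one} to push mass to the linear level, a balanced partition into $L$ blocks, per-block anticoncentration plus the noise lemma to create many jumps supported on disjoint blocks, and the block-sensitivity bound $\mathrm{bs}(g)\le 6d^2$ to derive the contradiction), but there is one genuine gap: your treatment of the large linear coefficients. From $\E_\rho[\hat{f_\rho}(\{i\})^2]\le \infl_i[f]<\theta$ you conclude only that, for a fixed restriction, at most $\sigma^2/\theta'$ coefficients exceed $\theta'$, and you propose to ``append the exceptions to the frozen coordinates.'' Counting the large coefficients does not control the \emph{mass} they carry: a random variable with second moment at most $\theta$ can put essentially all of its expectation on rare events where it is huge, so for most restrictions nearly the entire linear mass $\sigma^2=\Omega(\mu/\log k)$ could sit on those few large coefficients. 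Freezing them then destroys the very quantity the rest of your argument (the balanced partition, the per-block anticoncentration with the constraint $|a_i|\le \sigma_{B_j}/(Kt)$, and the final $\sqrt{L}\,\sigma$ accounting) depends on; it also perturbs the remaining coefficients, since restricting a coordinate shifts $\hat{f_y}(\{j\})$ by terms $\pm\hat{f_y}(\{i,j\})$. The paper closes exactly this hole by observing that each $\hat{f_y}(\{j\})$ is a degree-$\le k$ polynomial in $y$ and invoking the hypercontractive truncation lemma (Lemma \ref{lemma:BGB_biased}) coordinatewise: at least half of each coefficient's expected square comes from $y$'s where it is pointwise at most $W^k\infl_j[f]\le W^k\theta$. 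This both guarantees that the ``small'' coefficients retain mass $\Omega(\mu/\log k)$ and gives the absolute bound $\sum_{j\in\mathsf{SMALL}_y}\hat{f_y}(\{j\})^2\le (2W)^k$ needed to run the reverse-Markov step that defines the good restrictions. Without some argument of this type (which genuinely uses the low degree of $f$, not just the influence bound), your construction can fail at the partition stage.

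Two smaller points. First, your appeal to Lemma \ref{lemma:BGB_biased} to convert $\|f_\rho-g_\rho\|_2=O(\sqrt{\delta})$ into pointwise closeness at $(x_0,z_1,\dots,z_L)$ is misplaced: since each $z_j$ is marginally uniform, plain Markov plus a union bound over the $L+1$ points (as in the paper) already gives $|f-g|\le 1/L$ at all of them outside probability $O(L^3\delta(3W)^k/\mu)$, and your estimate $O(\sqrt{\delta L})$ for the accumulated error is not what Cauchy--Schwarz gives (it is $\lesssim L\sqrt{\delta'}$), though this only affects bookkeeping absorbed by the choice of $C$. Second, your Chernoff-based count of jump-witnessing blocks (with independent per-block noise) is a legitimate alternative to the paper's linearity-of-expectation plus reverse-Markov count; that difference is cosmetic.
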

\begin{remark}
    Notice here that although $f$ is not pointwise bounded, $g$ is.
\end{remark}
\begin{proof}
Let $W, B,K$ be the universal constants from Lemma \ref{lemma:BGB_biased}, Lemma \ref{lemma:noise_lemma} and Lemma \ref{lemma:anticoncentration} respectively. We take $C$ to be a constant sufficiently larger than $B,K,W$. 

There exists a $t$ such that
$$ \displaystyle \sum_{ 2^t \leq |S \cap J^c| < 2^{t+1} } \hat{f}(S)^2 \geq \dfrac{\mu}{\log(k)}.$$ Let $\rho = (U, y \in \cube^{[n] \setminus U})$ be a random restriction where each $j \in J$ is killed and given a uniformly random assignment, and survival probability for each $j \not \in J$ is $2^{-t}$. By Lemma \ref{lemma:level_one},
\begin{align*}
    \E \left[ \displaystyle \sum_{j \in U} \hat{f_y}(\{j\})^2 \right] \geq \dfrac{\mu}{20 \log(k)}.
\end{align*}
Fix a $U$ such that
$$ \E_{y \in \cube^{[n] \setminus U}} \left[ \displaystyle \sum_{j \in U} \hat{f_y}(\{j\})^2 \right] \geq \dfrac{\mu}{20 \log(k)}.$$

By Parseval's theorem we have for all $j \in U$,
$$ \E \left[ \hat{f_y}(\{j\})^2 \right] = \displaystyle \sum_{S \cap U = \{j\} } \hat{f}(S)^2 \leq \infl_j [f].$$
For each $y \in \cube^{[n]}$ define $\mathsf{SMALL}_y = \{j | \hat{f_y}(\{j\})^2 \leq W^k \infl_j[f] \}$. Observe that for all $y$,
$$ \displaystyle \sum_{j \in \mathsf{SMALL}_y} \hat{f_y}(\{j\})^2 \leq W^k \infl[f] \leq k \cdot W^k \leq (2W)^k.$$

For each $j \in U$ we have from Lemma \ref{lemma:hypercontractivity}
$$ \E \left[ \hat{f}_y(\{j\})^2 \mathsf{1}_{\hat{f}_y(\{j\})^2 \leq W^k \infl_j[j] } \right] \geq \dfrac{1}{2}\E \left[ \hat{f}_y(\{j\})^2 \right] . $$
Thus,
$$ \E_{y \in \cube^{[n] \setminus U}} \left[ \displaystyle \sum_{j \in \mathsf{SMALL}_y} \hat{f}_y(\{j\})^2 \right] \geq \dfrac{\mu}{40 \log(k)},$$
so applying Lemma \ref{lemma:reverse-markov} $^{2}$
$$ \prob_{y \in \cube^{[n] \setminus U}} \left[  \sum_{j \in \mathsf{SMALL}_y} \hat{f}_y(\{j\})^2 \geq \dfrac{\mu}{80 \log(k)} \right] \geq \dfrac{\mu}{ 80 \log(k) (2W)^k} \geq \dfrac{\mu}{(3W)^k}.$$
\footnotetext[2]{See remark \ref{remark:constant}.}

Call $y \in \cube^{[n] \setminus U}$ for which $  \sum_{j \in \mathsf{SMALL}_y} \hat{f}_y(\{j\})^2 \geq \dfrac{\mu}{40 \log(k)} $ to be \textit{good}. Let $\text{GOOD} = \{y \in \cube^{[n] \setminus U} | y \text{ is good} \}$. Let  $L = \left \lceil \dfrac{(2B)^k d^8}{\var [f]} \right \rceil$. For each good $y$, choose a partition $\mathsf{DIVIDE}(y) = (B_1, B_2, \cdots , B_L)$ of $\mathsf{SMALL}_y$ such that for all $1 \leq i \leq L$, 
$$ \displaystyle \sum_{j \in B_i} \hat{f}_y(\{j\})^2 \geq \dfrac{\mu}{80 L \log(k)}.$$
(If there are multiple such partitions, choose any one of them and call it $\mathsf{DIVIDE}(y)$.) \newline

Our choice of parameters ensures that for all $j \in \mathsf{SMALL}_y,$ $\hat{f}_y(\{j\})^2 \leq \dfrac{\mu}{80 L \log(k)}$, so such a partition exists by Lemma \ref{lemma:balanced_partition}. Let $\rho_1, \rho_2, \cdots , \rho_{k+1}$ be the constants from Lemma \ref{lemma:interpolation}. \newline

Suppose, for the sake of contradiction, there exists a degree $d$ polynomial $g: \cube^n \rightarrow [0,1]$ such that $\E [(f(x)-g(x))^2] \leq \delta$. Throughout the rest of the proof, for a string $s_1 \in \cube^{[n] \setminus U}$ and a string $s_2 \in \cube^U$, the pair $(s_1, s_2)$ denotes the string $s \in \cube^n$ which agrees with $s_1$ on $[n] \setminus U$ and with $s_2$ on $U$. Consider the following randomized procedure which returns a real number.

\fbox{\parbox{\textwidth}{\textit{Procedure 1:}
\begin{enumerate}

\item Sample a $y \in \mathsf{GOOD}$ uniformly at random.
\item Sample $\rho \leftarrow \{ \rho_1, \rho_2, \cdots , \rho_{k+1} \}$ uniformly at random.
\item Sample $z \leftarrow \cube^{U}$ uniformly at random.
\item Let $\mathsf{DIVIDE}(y) = (B_1, B_2, \cdots , B_L)$. Sample $\Tilde{z}^{(i)} \leftarrow N_{B_i, \rho} (z)$ for $1 \leq i \leq L$.
\item Return $\displaystyle \sum_{i=1}^{L} |f(y,z) - f(y,\Tilde{z}^{(i)})|$.
\end{enumerate}}}
We estimate the probability that procedure 1 returns a number $>15d^2$ in two different ways. First, we obtain a lower bound from the definition of $\text{GOOD}$. Then, we obtain an upper bound from the assumption that $\E [ (f(x)-g(x))^2] \leq \delta$ and the fact that we have a lower bound on $\text{Pr}_y[y \in \text{GOOD}]$ (which, recall, follows from the assumption that $\displaystyle \sum_{S \not \subseteq J} \hat{f}(S)^2 \geq \mu$). These two bounds will contradict each other - and that will prove the theorem.

\subsection*{Lower bound: }
Fix a $y \in \mathsf{GOOD}$. Let $\mathsf{DIVIDE}(y)= (B_1, B_2, \cdots , B_L).$\newline

Let $w =  \sqrt{\dfrac{\mu}{80 L \log(k)}}$. For each $i \in [L]$ we have
$$ \sqrt{ \displaystyle \sum_{j \in B_i} \hat{f}_y(\{j\})^2} \geq w.$$
Choose $\alpha$ such that $\alpha w = \dfrac{100 d^4 (2B)^k}{L}$. Our choice of $L$ ensures that $\alpha \leq 1$. Moreover, our choice for influence threshold $\theta$ ensures that $|\hat{f}_y(\{j\})| \leq \dfrac{w}{K \alpha}$ for all $j \in B_i$ where $K$ is the universal constant from the Lemma \ref{lemma:anticoncentration}. 

Therefore, we can apply Lemma \ref{lemma:anticoncentration} to obtain that
$$ \prob_{z \in \cube^{B_i}} \left[ \displaystyle \sum_{j \in B_i} z_j \hat{f}_y(\{j\}) \geq \dfrac{100 d^4 (2B)^k}{L} \right] \geq \exp (- K \alpha^2) \geq \dfrac{1}{K_1}.$$
Here $K_1 = \exp (K)$ is an absolute constant.

By Lemma \ref{lemma:noise_lemma} applied on the restriction $f_y: \cube^U \rightarrow [0,1]$, as we sample $z \leftarrow \cube^{U}$ u.a.r, $\rho \leftarrow \{ \rho_1, \cdots , \rho_{k+1} \}$ u.a.r, $\tilde{z}^{(i)} \leftarrow N_{\rho, B_i} (z)$, we have that

$$ \prob \left[ |f(y,z) - f(y, \tilde{z}^{(i)}) | \geq \dfrac{30 d^3 (2B)^k}{L} \right] \geq \dfrac{1}{K_1 (k+1) B^k} \geq \dfrac{1}{(2B)^k}.  $$
By linearity of expectation,
$$ \E \left[ \left| \left \{ i \in [L] | | f(y,z) - f(y,\Tilde{z}^{(i)})| \geq \dfrac{30 d^3 (2B)^k}{L}  \right \}\right| \right] \geq \dfrac{L}{(2B)^{k}} .$$
Using Lemma \ref{lemma:reverse-markov},
$$ \prob \left[ \left| \left \{ i \in [L] | | f(y,z) - f(y,\Tilde{z}^{(i)})| \geq \dfrac{30 d^3 (2B)^k}{L}  \right \} \right| \geq \dfrac{L}{2 \times (2B)^k} \right] \geq \dfrac{1}{2L \times (2B)^k}  .$$

Observe that 
$$  \left| \left \{ i \in [L] | | f(y,z) - f(y,\Tilde{z}^{(i)})| \geq \dfrac{30 d^3 (2B)^k}{L}  \right \} \right| \geq \dfrac{L}{2 \times (2B)^k} \implies \displaystyle \sum_{i \in [L]} | f(y,z) - f(y,\Tilde{z}^{(i)}) | \geq 15d^3.$$
We conclude that for all $y \in \mathsf{GOOD}$, as $z, \tilde{z}^{(1)}, \cdots , \tilde{z}^{(L)}$ are sampled as in Procedure 1,
$$ \prob   \left[ \displaystyle \sum_{i \in [L]} | f(y,z) - f(y,\Tilde{z}^{(i)}) | \geq 15d^3 \right]  \geq \dfrac{1}{2L \times (2B)^k}. $$
Thus, with probability at least $\dfrac{1}{2L \times (2B)^k}$, procedure 1 returns a number greater than $15d^3 > 15d^2$.
\subsection*{Upper bound:}

Since $\E [(f(x) - g(x))^2] \leq \delta$ and $\prob_{y \in \cube^{[n] \setminus U}} [y \text{ is good}] \geq \mu/(3W)^k$, we have that

$$ \E [(f(x) - g(x))^2 | x_{[n] \setminus U} \text{ is good}] \leq \dfrac{\delta}{\mu} (3W)^k.$$ Now consider a uniformly sampled $y \in \mathsf{GOOD}$. Observe that as we sample $z \leftarrow \cube^U$ u.a.r, $\rho \leftarrow \{\rho_1, \cdots , \rho_{k+1}\}$ u.a.r and $\Tilde{z}^{(i)} \leftarrow N_{\rho, B_i} (z)$, the marginal distribution of $\tilde{z}^{(i)}$ is uniform on $\cube^U$. By Markov's inequality, we have
$$ \prob \left[(f(y,z) - g(y,z))^2 \geq \dfrac{1}{L^2} \right] \leq  \dfrac{L^2 \delta}{\mu} (3W)^k$$ 
and for all $i \in [L]$,

$$ \prob \left[(f(y,\tilde{z}^{(i)}) - g(y,\tilde{z}^{(i)}))^2 \geq \dfrac{1}{L^2} \right] \leq  \dfrac{L^2 \delta }{ \mu} (3W)^k.$$ 

By union bound, the probability that $ (f(y,z) - g(y,z))^2 \geq \dfrac{1}{L^2} $ or for some $i$, $(f(y,\tilde{z}^{(i)}) - g(y,\tilde{z}^{(i)}))^2 \geq \dfrac{1}{L^2}$ is at most $(L+1)  \dfrac{L^2  \delta }{ \mu} (3W)^k \leq \dfrac{2L^3 \delta}{ \mu} (3W)^k$. Our choice of $\delta$ ensures that this quantity is less than $< \dfrac{1}{2L \times (2B)^{k}}$. Observe that if none of these bad events holds, since the block sensitivity of $g$ is bounded above by $6d^2$ (Theorem \ref{thm:block_sensitivity}), we have that

$$ \displaystyle \sum_{i \in [L]} | g(y,z) - g(y, \tilde{z}^{(i)})| \leq 6d^2 \implies  \displaystyle \sum_{i \in [L]}  | f(y,z) - f(y, \Tilde{z}^{(i)})  | 
\leq 6d^2 + 1 < 15d^2.$$

Thus, we conclude
$$ \prob \left[ \displaystyle \sum_{i \in [L]} | f(y,z) - f(y, \Tilde{z}^{(i)})| > 15d^2 \right] < \dfrac{2L^3 \delta}{ \mu} (3W)^k <  \dfrac{1}{2L \times (2B)^k} .$$

As promised, we get conflicting lower and upper bounds for the probability that procedure 1 returns a number $>15d^2$. This is our desired contradiction.
\end{proof}

Now we show that we can improve the tail bound of \cite{DFKO06} under the additional assumption that $f$ has low degree. This follows straightforwardly from Theorem \ref{theorem:main_result}. 

\begin{theorem}
\label{theorem:improved_tail_bound_final}
There exists a universal constant $C>0$ such that the following is true: \newline
Let $f: \cube^n \rightarrow [0,1]$ be a degree $d$ function. Let $\theta = \dfrac{\var[f]^2}{d^C C^k}$ and $J = \{ j | \infl_j [f] \geq \theta \}$. If $\displaystyle \sum_{S \not \subseteq J} \hat{f}(S)^2 \geq \mu , $ then $\displaystyle \sum_{|S| > k} \hat{f}(S)^2 \geq \dfrac{\mu}{d^C C^k}$. 
\end{theorem}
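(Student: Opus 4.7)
The plan is to apply Theorem \ref{theorem:main_result} with $h := f^{\leq k} = \sum_{|S| \leq k} \hat{f}(S) \chi_S$ in the role of the degree-$k$ function and $f$ itself in the role of the bounded degree-$d$ comparison function. Note that $h$ has degree $k$, $\E[h^2] \leq \E[f^2] \leq 1$, and $\E[(h - f)^2] = \sum_{|S| > k} \hat{f}(S)^2$, so invoking Theorem \ref{theorem:main_result} directly produces a lower bound on the Fourier tail we wish to bound.

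I would argue by contradiction, assuming $\sum_{|S| > k} \hat{f}(S)^2 < \mu/(d^C C^k)$. Splitting the hypothesis at level $k$,
\[
\mu \leq \sum_{S \not\subseteq J} \hat{f}(S)^2 = \sum_{\substack{|S| \leq k \\ S \not\subseteq J}} \hat{f}(S)^2 \,+\, \sum_{\substack{|S| > k \\ S \not\subseteq J}} \hat{f}(S)^2,
\]
and bounding the second sum by the (assumed small) Fourier tail forces $\sum_{|S| \leq k,\, S \not\subseteq J} \hat{f}(S)^2 \geq \mu/2$ (for $C$ large enough). Since $\hat{h}(S) = \hat{f}(S)$ for $|S| \leq k$, this says $\sum_{S \not\subseteq J} \hat{h}(S)^2 \geq \mu/2$.

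Next I would invoke Theorem \ref{theorem:main_result} on $h$ with parameter $\mu_1 := \mu/2$; its threshold is $\theta_1 = \mu_1^2/(C_0^k d^{C_0})$ where $C_0$ is the universal constant of that theorem. The hypothesis to verify is $\sum_{S \not\subseteq J^{(h)}_{\theta_1}} \hat{h}(S)^2 \geq \mu_1$, with $J^{(h)}_{\theta_1} := \{j : \infl_j[h] \geq \theta_1\}$. The monotonicity $\infl_j[h] \leq \infl_j[f]$ gives the containment $J^{(h)}_{\theta_1} \subseteq J$ as soon as $\theta_1 \geq \theta$, and then enlarging the junta set only increases the non-junta mass, so
\[
\sum_{S \not\subseteq J^{(h)}_{\theta_1}} \hat{h}(S)^2 \,\geq\, \sum_{S \not\subseteq J} \hat{h}(S)^2 \,\geq\, \mu/2 \,=\, \mu_1.
\]
The conclusion of Theorem \ref{theorem:main_result} then yields $\sum_{|S| > k} \hat{f}(S)^2 \geq \mu/(2 C_0^k d^{C_0})$, which contradicts the contradiction hypothesis once $C$ is chosen sufficiently larger than $C_0$.

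The delicate step is the threshold comparison $\theta_1 \geq \theta$, namely $\mu^2/(4 C_0^k d^{C_0}) \geq \var[f]^2/(d^C C^k)$. The choice $\theta \propto \var[f]^2$ in the statement is calibrated precisely so that this comparison can be absorbed into the gap between $C$ and $C_0$ together with Remark \ref{remark:constant}; this is the main obstacle in the argument, while everything else is routine truncation-and-contradiction bookkeeping.
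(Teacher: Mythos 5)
Your route is exactly the paper's own: truncate to $h=f^{\leq k}$, use $\E[h^2]\leq 1$ and $\infl_j[h]\leq\infl_j[f]$, apply Theorem \ref{theorem:main_result} to $h$ with $g=f$, and dispose first of the case where the tail already exceeds $\mu/2$. However, the step you yourself flag as delicate is a genuine gap, and the proposed fix does not work: the needed comparison $\theta_1\geq\theta$ reads $\mu^2/(4C_0^kd^{C_0})\geq\var[f]^2/(d^CC^k)$, and since $\mu$ is a free parameter that may be arbitrarily small compared to $\var[f]$ (shrinking $\mu$ only weakens the hypothesis, while $\theta$, hence $J$, stays fixed), no choice of $C$ relative to $C_0$ can absorb the ratio $\var[f]^2/\mu^2$. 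In fact, with the variance-based threshold the statement is false for small $\mu$: take $n=k+1$, $d=k$, and $f=\tfrac12+\tfrac14\chi_{[k]}+\delta x_{k+1}$ with $\delta^2=10^{-6}/(d^CC^k)$. Then $\infl_{k+1}[f]=\delta^2<\theta$ while $[k]\subseteq J$, so the hypothesis holds with $\mu=\delta^2>0$, yet $\sum_{|S|>k}\hat f(S)^2=0$.

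You should know, though, that the paper's own proof has the identical untreated issue: it feeds the set $H$ cut at the threshold $\theta$ of the present theorem into Theorem \ref{theorem:main_result}, which actually demands the cut at $(\mu/2)^2/(C_0^kd^{C_0})$, and the transfer of the mass bound goes in the right direction only when $\theta\leq(\mu/2)^2/(C_0^kd^{C_0})$, i.e.\ only when $\mu\geq 2\var[f]\,(C_0/C)^{k/2}d^{(C_0-C)/2}$. What the truncation argument genuinely proves --- and what is actually invoked later, in the proof of Theorem \ref{theorem:approx_junta}, where $\theta$ is set to $\mu^2/(d^CC^k)$ rather than to a variance-based quantity --- is the version of the present theorem with the $\mu$-dependent threshold $\theta=\mu^2/(d^CC^k)$ (equivalently, the stated version under the extra hypothesis that $\mu$ is at least $\var[f]$ divided by such a $d^{\Omega(C)}C^{\Omega(k)}$ factor). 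With that threshold your write-up goes through verbatim and then coincides with the paper's proof step for step; so the obstacle you identified is real, but it is a defect of the statement as given rather than something a larger constant, or Remark \ref{remark:constant}, can repair.
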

\begin{proof}
Assume $\displaystyle \sum_{|S|>k} \hat{f}(S)^2 < \mu/2$ (otherwise we are done). Let $\tilde{C}$ be the universal constant from Theorem \ref{theorem:main_result}. \newline

The idea is to apply Theorem \ref{theorem:main_result} to the truncated function 
$$f^{\leq k} (x) = \displaystyle \sum_{|S| \leq k} \hat{f}(S) \chi_S(x).$$
Note that while $f^{\leq k}$ is not pointwise bounded, it satisfies $\E [(f^{\leq k})^2] \leq 1$ and $\infl_j [f^{\leq k}] \leq \infl_j [f]$ for all $j$ (this is clear from the Fourier expressions). Let $H = \{j | \infl_j[f^{\leq k}] \geq \theta\}$. We have $H \subseteq J$, so
$$ \displaystyle \sum_{S \not \subseteq H} \hat{f^{\leq k}}(S)^2 \geq \displaystyle \sum_{S \not \subseteq J} \hat{f}(S)^2 - \dfrac{\mu}{2} \geq \dfrac{\mu}{2}.$$

Applying Theorem \ref{theorem:main_result}, we get that for any bounded degree $d$ $g: \cube^n \rightarrow [0,1]$, $\E [(f(x) - g(x))^2] \geq \dfrac{\mu}{2d^{\tilde{C}} \tilde{C}^k}$. Taking $g$ to be our original function $f$, we get the desired tail lower bound:
$$ \E [(f - f^{\leq k})^2] \geq \dfrac{\mu}{2 d^{\tilde{C}} {\tilde{C}}^k} \implies \displaystyle \sum_{|S| > k} \hat{f}(S)^2 > \dfrac{\mu}{2 d^{\tilde{C}} {\tilde{C}}^k}.$$
Taking $C$ to be a slightly larger constant than $\tilde{C}$, we get that
$$ \displaystyle \sum_{|S|>k} \hat{f}(S)^2 \geq \dfrac{\mu}{d^C C^k}.$$
\end{proof}

\subsection{Random restrictions can be approximated by juntas}

In this section we use the fact that random restrictions have bounded tails to show that they can be approximated by juntas.

\begin{theorem}
    \label{theorem:approx_junta}
    For any constants $\tilde{C}_1, \tilde{C}_2 > 0$, there exist constants $\tilde{C}_3, \tilde{C}_4, \tilde{C}_5 > 0$ such that the following holds: \newline
    Let $f: \cube^n \rightarrow [0,1]$ be a degree $d$ polynomial and let $\rho$ be a random restriction with survival probability $\dfrac{\log(d)}{\tilde{C}_3 d}$. With probability at least $1-d^{-\tilde{C}_2}, $ $f_{\rho}$ is a $(d^{-\tilde{C}_1} \var [f], \var[f]^{-2} d^{\tilde{C}_4})$ junta. Moreover, if $J$ denotes the set of coordinates on which the junta depends, for each $j \in J$ we have $ \infl_j [f] \geq \var[f]^{-2} d^{-\tilde{C}_5}$.
\end{theorem}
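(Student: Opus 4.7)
The plan is to combine the tail bound of Lemma \ref{lemma:bounded_tail} with the contrapositive of the improved-tail Theorem \ref{theorem:improved_tail_bound_final}. First, set $k = \log(d)$ and take the survival probability to be $\log(d)/(\tilde{C}_3 d)$ for a constant $\tilde{C}_3$ to be fixed later. Lemma \ref{lemma:bounded_tail} together with Markov's inequality gives that with probability at least $1 - d^{-\tilde{C}_2}$,
$$\sum_{|T|>k}\hat{f_\rho}(T)^2 \leq d^{\tilde{C}_2}\exp(-\tilde{C}_3\log(d)/8)\var[f].$$
Taking $\tilde{C}_3$ sufficiently large in terms of $\tilde{C}_1, \tilde{C}_2$ and the universal constant $C$ from Theorem \ref{theorem:improved_tail_bound_final}, this bound is at most $\mu/(d^C C^k)$, where $\mu = d^{-\tilde{C}_1}\var[f]$.

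On this good event, the contrapositive of Theorem \ref{theorem:improved_tail_bound_final} applied to the bounded degree-$d$ function $f_\rho$ yields $\sum_{S \not\subseteq J_\rho}\hat{f_\rho}(S)^2 < \mu$, where $J_\rho = \{j : \infl_j[f_\rho] \geq \var[f_\rho]^2/(d^C C^k)\}$. By Remark \ref{remark:junta_approximation}, $f_\rho$ is $\mu$-approximated by a junta supported on $J_\rho$. The arity bound follows from $|J_\rho|\cdot \theta_\rho \leq \infl[f_\rho] \leq d\var[f_\rho]$ and $C^k = d^{O(1)}$, giving $|J_\rho| \leq d^{O(1)}/\var[f_\rho]$. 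To match the stated arity $\var[f]^{-2}d^{\tilde{C}_4}$, I split into two cases: if $\var[f_\rho] \geq \var[f]^2/d^{O(1)}$ the above estimate already falls under $\var[f]^{-2}d^{\tilde{C}_4}$; otherwise $\var[f_\rho]$ is so small that $f_\rho$ is itself $\mu$-approximated by its constant Fourier term (a $0$-junta), which trivially satisfies the arity budget.

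For the ``moreover'' clause, I refine the junta support to $J_\rho \cap J^*$, where $J^* = \{j : \infl_j[f] \geq \var[f]^{-2}d^{-\tilde{C}_5}\}$. The extra approximation error from this refinement is at most $\sum_{j \in J_\rho \setminus J^*}\infl_j[f_\rho]$, whose expectation equals $p\sum_{j \not\in J^*}\infl_j[f]$ via the identity $\E_\rho[\infl_j[f_\rho]\,\mathsf{1}_{j \in S_\rho}] = p\,\infl_j[f]$, which follows immediately from Parseval after fixing $S_\rho$. The main obstacle I expect is that the naive total-influence bound $\sum_{j \not\in J^*}\infl_j[f] \leq d\var[f]$ only gives an expected extra error of order $\log(d)\var[f]/\tilde{C}_3$; by Markov this is not tight enough to be absorbed into $\mu = d^{-\tilde{C}_1}\var[f]$. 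Sharpening this will likely require either a hypercontractive concentration estimate on $\infl_j[f_\rho]$ (to directly bound $\prob[j \in J_\rho]$ for $j \not\in J^*$) or a finer accounting that exploits the polynomial size of $J_\rho$ to union-bound over only a controlled set of candidate coordinates.
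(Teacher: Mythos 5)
Your main argument --- Lemma \ref{lemma:bounded_tail} plus Markov's inequality to get a Fourier tail of at most $d^{\tilde{C}_2} d^{-\tilde{C}_3/8}\var[f]$ above level $\log(d)$ with probability $1-d^{-\tilde{C}_2}$, then the contrapositive of Theorem \ref{theorem:improved_tail_bound_final} with $\mu = d^{-\tilde{C}_1}\var[f]$, Remark \ref{remark:junta_approximation}, and the arity bound $|J_\rho|\,\theta_\rho \le \infl[f_\rho] \le d\var[f_\rho]$ --- is correct and is essentially the paper's own proof. Your case split on $\var[f_\rho]$ is a workable patch for the fact that the threshold in Theorem \ref{theorem:improved_tail_bound_final} is stated in terms of $\var[f_\rho]^2$ (just fix the case boundary at $\var[f]^2 d^{-\tilde{C}_1 - O(1)}$ so that in the low-variance case you indeed have $\var[f_\rho]\le\mu$); the paper avoids the split by taking the influence threshold in the form $\theta=\mu^2/(d^C C^{\log d})$ (the form appearing in Theorem \ref{theorem:main_result}), which depends only on $\var[f]$ and $d$, so the arity and influence guarantees come out directly as $\var[f]^{-2}\poly(d)$ and $\var[f]^2/\poly(d)$.

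The gap you flag in the ``moreover'' clause is an artifact of reading the statement literally. As the theorem is actually invoked in the proof of Theorem \ref{theorem:aa_for_restrictions}, the intended guarantee concerns the influence \emph{inside the restriction}, $\infl_j[f_\rho]$, with threshold $\var[f]^2 d^{-\tilde{C}_5}$; the ``$\infl_j[f]$'' and the exponent ``$\var[f]^{-2}$'' in the statement are typos (the later proof uses precisely $\infl_j[f_\rho]\ge\dots$, and the final conclusion there is $\var[f]^2/d^{C_2}$). Under that reading nothing remains to be proved: the junta is by construction supported on $J=\{j:\infl_j[f_\rho]\ge\theta\}$, so every coordinate it depends on has $f_\rho$-influence at least $\theta$, which with the paper's choice of $\theta$ is $\var[f]^2/\poly(d)$ (your route gives $\var[f_\rho]^2/\poly(d)\ge\var[f]^4/\poly(d)$ in the high-variance case, weaker but still sufficient for Theorem \ref{theorem:aa_for_restrictions} since there $\var[f]\ge 1/d$). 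Your instinct that the literal claim about $\infl_j[f]$ does not follow cheaply is sound --- $\infl_j[f_\rho]$ is not pointwise dominated by $\infl_j[f]$, and as you compute, intersecting with $J^*$ costs expected error of order $\log(d)\var[f]/\tilde{C}_3\gg\mu$ --- but that stronger statement is neither proved nor needed in the paper, so the $J^*$ refinement should simply be dropped.
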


\begin{proof}
    We consider a random restriction with survival probability $\dfrac{\log (d)}{\tilde{C}_3 d}$. \newline
    
    By Lemma \ref{lemma:bounded_tail}, the expected Fourier tail of $f_{\rho}$ above level $\log(d)$ is at most $\exp (-\tilde{C}_3 \log(d)/8) \var[f] = \dfrac{\var[f]}{d^{\tilde{C}_3/8}}.$ By Markov's inequality, with probability at least $1 - d^{\tilde{C}_3/16}$, the Fourier tail above $\log(d)$ is $\leq \dfrac{\var [f]}{d^{\tilde{C}_3/16}}$. Let $C$ be the constant from Theorem \ref{theorem:improved_tail_bound_final}.  Let $\mu = \dfrac{\var [f]}{ d^{\tilde{C}_3/16}} d^C C^{\log(d)} =  \dfrac{\var [f]}{ d^{\tilde{C}_3/16}} d^{2C}$, $\theta = \dfrac{\mu^2}{d^C C^{\log(d)}} = \dfrac{\mu^2}{d^{2C}}$ and $J = \{ j | \infl_j[f_{\rho}] \geq \theta\}$. Let $u: \cube^n \rightarrow [0,1]$ be the junta which reads the coordinates in $J$ and outputs the average over the coordinates in $J^c$. Choose $\tilde{C}_3$ large enough so that $\mu \leq d^{-\tilde{C}_1} \var[f]$. Applying Theorem \ref{theorem:improved_tail_bound_final}, we see that $u$ approximates $f_{\rho}$ to accuracy $ d^{-\tilde{C}_1} \var[f]$. Using the fact that total influence is bounded by $d$, we see that $u$ has arity $\leq \var[f]^{-2} d^{ C' \tilde{C}_3}$ for a universal constant $C'$. Taking $(\tilde{C}_4,\tilde{C}_5)= (C' \tilde{C}_3, \tilde{C}_3/32 - 2C)$, we are done.
    
\end{proof}

\subsection{Aaronson-Ambainis conjecture is true for random restrictions}

\begin{theorem}
\label{theorem:aa_for_restrictions}
    There exist constants $C_1,C_2>0$ such that the following holds: let $f: \cube^n \rightarrow [0,1]$ be a degree $d$ polynomial $(d \geq 2)$ with $\var[f] \geq 1/d$. Let $\rho$ denote a random restriction with alive probability $\dfrac{\log(d)}{ C_1 d}$. Then,
    $$ \prob \left[f_{\rho} \text{ has a coordinate with influence }\geq \dfrac{\var[f]^2}{d^{C_2}}\right] \geq \dfrac{\var[f] \log(d)}{50C_1d}.$$
    
\end{theorem}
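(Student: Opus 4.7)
The plan is to combine Theorem \ref{theorem:approx_junta} (which says that $f_\rho$ is well approximated by a $\poly(d)$-junta with high probability) with Lemma \ref{lemma:variance_of_random_restriction} (which says that $f_\rho$ retains nontrivial variance in expectation). First, since $\E[\var[f_\rho]] \geq p\var[f]$ with $p = \log(d)/(C_1 d)$, and $\var[f_\rho] \leq 1/4$ always (because $f_\rho$ takes values in $[0,1]$), applying the reverse Markov inequality (Lemma \ref{lemma:reverse-markov}) with $M = 1/4$ gives
$$\prob\left[\var[f_\rho] \geq \frac{p\var[f]}{2}\right] \geq \frac{2\var[f]\log(d)}{C_1 d}.$$

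Next, I would invoke Theorem \ref{theorem:approx_junta} with parameters $\tilde{C}_1 = 2$ and $\tilde{C}_2 \geq 3$, and set $C_1 := \tilde{C}_3$. This produces, with probability at least $1 - d^{-\tilde{C}_2}$, a junta $u$ on a set $J$ of coordinates with $|J| \leq \var[f]^{-2} d^{\tilde{C}_4}$ such that $\|f_\rho - u\|_2^2 \leq d^{-\tilde{C}_1}\var[f]$ and $\infl_j[f_\rho] \geq \var[f]^2/d^{\tilde{C}_5}$ for every $j \in J$. The hypothesis $\var[f] \geq 1/d$ (used only here) ensures $d^{-\tilde{C}_2} \leq d^{-3} \ll \var[f]\log(d)/(C_1 d)$, so by a union bound the high-variance event and the junta-approximation event hold simultaneously with probability at least $\var[f]\log(d)/(50 C_1 d)$, comfortably inside the slack from the first step.

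On this joint event, $J$ must be nonempty: otherwise $u$ would collapse to the constant $\E[f_\rho]$, yielding $\|f_\rho - u\|_2^2 = \var[f_\rho] \geq p\var[f]/2 \gg d^{-\tilde{C}_1}\var[f]$, contradicting the approximation bound. Any $j \in J$ then witnesses the desired influence with $C_2 := \tilde{C}_5$. The genuine technical work was already absorbed into Theorems \ref{theorem:main_result} and \ref{theorem:approx_junta}; the only remaining care is bookkeeping the universal constants $C_1, \tilde{C}_1, \tilde{C}_2$ so that the failure probability of the junta approximation is dominated by the variance lower bound, and so that the arity of the junta is small enough relative to $p\var[f]/2$ for the nonemptiness of $J$ to follow.
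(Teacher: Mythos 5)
Your proposal is correct and follows essentially the same route as the paper: reverse Markov applied to $\E[\var[f_\rho]] \geq p\,\var[f]$, Theorem \ref{theorem:approx_junta} with failure probability made negligible via $\var[f]\geq 1/d$, a union bound, and nonemptiness of $J$ because a constant (empty-junta) approximation would have error at least $\var[f_\rho]$, which exceeds the junta error bound. The only cosmetic difference is that the paper certifies $J_\rho\neq\phi$ by comparing Fourier mass inside $J_\rho$ with $\var[f_\rho]-d^{-M}\var[f]$, which is the same observation in Fourier language.
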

\begin{proof}
    Let $M$ be a large constant. Apply Theorem \ref{theorem:approx_junta} with $(\tilde{C}_1,\tilde{C}_2)= (M, M)$ to get constants $\tilde{C}_3, \tilde{C}_4, \tilde{C}_5$. Let $\rho$ be a random restriction with survival probability $\dfrac{\log (d)}{\tilde{C}_3 d}$.
    By Lemma \ref{lemma:variance_of_random_restriction},
    $$ \E [\var[f_{\rho}]] \geq \dfrac{\var[f] \log(d)}{\tilde{C}_3 d}$$
    so by Lemma \ref{lemma:reverse-markov},
    $$ \prob \left[\var[f_{\rho}] \geq \dfrac{\var[f] \log(d)}{2\tilde{C}_3d}\right] \geq \dfrac{\var[f] \log(d)}{2 \tilde{C}_3 d}.$$
     Since $\var[f] \geq 1/d$, $d^{-M} \leq \dfrac{\var[f] \log(d)}{10 \tilde{C}_3 d}$. By Theorem \ref{theorem:approx_junta} and Remark \ref{remark:junta_approximation}, with probability at least $1-d^{-M}$, there exists a $J_{\rho} \subseteq [n]$ such that every coordinate in $J_{\rho}$ has influence $\geq \var[f_{\rho}]^{-2} d^{- \tilde{C_5}} $ and 
    \begin{align*} \displaystyle \sum_{S \not \subseteq J_{\rho}} \hat{f_{\rho}}(S)^2 & \leq d^{-M} \var [f] .\end{align*}
    So with probability at least $\dfrac{\var[f] \log(d)}{2\tilde{C}_3 d} - d^{-M} \geq \dfrac{\var[f] \log(d)}{4 \tilde{C_3}d}$, both these events (high variance of $f_{\rho}$ and existence of $J_{\rho}$) hold and we have that
    \begin{align*}
     \displaystyle \sum_{S \subseteq J_{\rho}} \hat{f_{\rho}}(S)^2 & \geq \var[f_{\rho}] - d^{-M} \var[f] \geq \dfrac{\var[f] \log(d)}{4 \tilde{C}_3 d}.  \end{align*}
    In particular, we have that $J_{\rho} \neq \phi$. Since for each $j \in J_{\rho}$ we have $\infl_j [f_{\rho}] \geq \var[f_{\rho}]^{-2} d^{-\tilde{C}_5}$, we are done by taking $(C_1, C_2)= (\tilde{C}_3, 2+\tilde{C}_5)$.
\end{proof}
\section{Conclusions and further directions}
In this paper, we showed that if $f: \cube^n \rightarrow \{0,1\}$ is a degree $d$ polynomial, a large fraction of its random restrictions have an influential coordinate. We observe that this implies one of the results proven in \cite{LZ19} about the existence of small sensitive blocks with a slightly different set of parameters. \newline

Let $f: \cube^n \rightarrow [0,1]$. An input $x \in \cube^n$ is said to be $(r,\epsilon)$ sensitive if there exists a $y$ such that $d(x,y) \leq r$ and $|f(x)-f(y)| \geq \epsilon$. \cite{LZ19} proves the following:

\begin{theorem}
    \label{lovett-zhang}
    If $f: \cube^n \rightarrow [0,1]$ has degree $d$, then at least $\Omega (\var[f])$ fraction of the inputs are $(r,\epsilon)$ sensitive where $\epsilon= \text{poly}(\var[f]/d), r= \text{poly}(d, 1/\epsilon, \log(n))$
\end{theorem}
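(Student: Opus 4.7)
My plan is to derive Theorem~\ref{lovett-zhang} as a corollary of Theorem~\ref{theorem:aa_for_restrictions}, converting the random-restriction statement into a pointwise sensitivity statement. As the paper's conclusion notes, this natural derivation may yield ``slightly different parameters'' than those in the LZ19 statement, and closing the remaining gap is the main technical challenge.

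First I apply Theorem~\ref{theorem:aa_for_restrictions}: taking a random restriction $\rho = (S, y)$ with survival probability $p = \log(d)/(C_1 d)$, with probability at least $p_1 := \var[f]\log(d)/(50 C_1 d)$ there exists a coordinate $j^\star \in S$ with $\infl_{j^\star}[f_\rho] \geq \eta := \var[f]^2/d^{C_2}$. (I may assume $\var[f] \geq 1/d$; otherwise $\Omega(\var[f])$ is small enough that the conclusion is essentially vacuous.) Call such a $\rho$ \emph{good}. A key observation is that when $z \in \cube^S$ is sampled uniformly, the combined input $x = (y, z)$ is exactly uniform on $\cube^n$.

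Next I unpack the influence bound into a pointwise sensitivity statement. For a good $(\rho, j^\star)$, the random variable $X_z := \bigl((f_\rho(z) - f_\rho(z^{(j^\star)}))/2\bigr)^2$ lies in $[0, 1/4]$ and has expectation $\infl_{j^\star}[f_\rho] \geq \eta$. Lemma~\ref{lemma:reverse-markov} then guarantees that at least a $2\eta$ fraction of $z$'s satisfy $X_z \geq \eta/2$, i.e., $|f(y,z) - f(y,z^{(j^\star)})| \geq \sqrt{2\eta}$. Each such $x = (y, z)$ is then $(1, \sqrt{2\eta})$-sensitive, witnessed by flipping the single coordinate $j^\star$. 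Combining with the probability of $\rho$ being good shows that at least a $p_1 \cdot 2\eta = \Omega(\var[f]^3 \log(d)/d^{C_2 + 1})$ fraction of inputs $x \in \cube^n$ are $(1, \epsilon)$-sensitive for $\epsilon = \poly(\var[f]/d)$, already a qualitative form of the conclusion.

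Finally, to lift the captured fraction from $\Omega(\var[f]^3 / d^{O(1)})$ up to $\Omega(\var[f])$, I would run $T = \poly(d, 1/\var[f])$ independent random restrictions and union the sensitive inputs discovered across trials; since Theorem~\ref{lovett-zhang} allows $r = \poly(d, 1/\epsilon, \log n)$, this boosting stays comfortably within the allowed block-size budget. The main obstacle is this coverage step: the per-trial capture probability $q(x)$ is not uniform in $x$ and could in principle concentrate on a small set of inputs, so a naive union bound over trials need not multiply the captured fraction by $T$. Overcoming this requires either an argument that $q(x)$ spreads across a typical input (for instance, by showing that a uniformly random $x$ admits many good restrictions simultaneously, via anti-concentration of $\var[f_\rho]$ combined with the approximate-junta structure from Theorem~\ref{theorem:approx_junta}), or an acceptance of the slightly weaker parameters mentioned in the paper's conclusion.
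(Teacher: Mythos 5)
The statement you were asked to prove is not actually proved in the paper: Theorem~\ref{lovett-zhang} is quoted from \cite{LZ19}, and the paper only observes, as an immediate consequence of Theorem~\ref{theorem:aa_for_restrictions}, the \emph{different} statement that an $\Omega(\var[f]/d^{O(1)})$ fraction of inputs are $(1,\epsilon)$-sensitive. Your first two steps (apply Theorem~\ref{theorem:aa_for_restrictions}, note that $(y,z)$ is uniform on $\cube^n$, and unpack the influence bound pointwise via Lemma~\ref{lemma:reverse-markov}) correctly reproduce exactly that weaker consequence, and that part of your argument is sound. But as a proof of the theorem as stated, with an $\Omega(\var[f])$ fraction, there is a genuine gap at the boosting step, which you yourself flag as unresolved.

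Moreover, the proposed boosting mechanism cannot work even in principle. The set of $(r,\epsilon)$-sensitive inputs is a fixed, deterministic subset of $\cube^n$; the random restriction is only a device for lower-bounding its measure, via $\prob_x[x \text{ is } (r,\epsilon)\text{-sensitive}] \geq \prob_{\rho,z}[\text{capture}]$. Running $T$ independent restrictions and ``unioning the sensitive inputs discovered across trials'' produces no new sensitive inputs: every input captured in any trial already lies in the same fixed set, so repetition does not improve the measure bound at all. The larger block budget $r=\poly(d,1/\epsilon,\log n)$ also does not help your construction, since all your witnesses are single-coordinate flips; exploiting larger blocks would require a genuinely different argument (this is what \cite{LZ19} does, with fractional-certificate machinery, to reach the $\Omega(\var[f])$ fraction). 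A secondary issue: your reduction needs $\var[f]\geq 1/d$ to invoke Theorem~\ref{theorem:aa_for_restrictions}, and the case $\var[f]<1/d$ is not vacuous for the stated theorem --- an $\Omega(\var[f])$ fraction is still a nontrivial claim --- so it would need separate handling. In short, your proposal re-derives the paper's remark following Theorem~\ref{lovett-zhang}, but does not prove Theorem~\ref{lovett-zhang} itself.
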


An immediate consequence of our result is that at least $\Omega(\var[f]/d^{O(1)})$ fraction of inputs are $(1, \epsilon)$ sensitive where $\epsilon = \text{poly}(\var[f]/d)$. Thus, while we lose a bit in the fraction of sensitive inputs, we gain by letting our block size be exactly 1 instead of $\text{poly}(d, 1/\epsilon, \log (n))$.

It would be interesting to see if we can extend this to the full Aaronson-Ambainis conjecture. We describe a potential approach here.

\begin{itemize}
    \item Given a degree $d$ polynomial $f: \cube^n \rightarrow [0,1]$, we can lift it with a Boolean function $g: \cube^m \rightarrow \cube^n$ each of whose coordinates $g_i$ is unbiased and given by a low degree function. Then, the lifted polynomial $f \odot g: \cube^m \rightarrow [0,1]$ will be a low degree polynomial. As long as the $g_i$'s are pairwise independent, the variance of $f$ will be preserved as well. Our result shows that a large fraction of random restrictions of $f \odot g$ have an influential coordinate. Can we construct $g_1, g_2, \cdots , g_n$ appropriately such that this allows us to conclude $f$ must have an influential coordinate as well? The $g_i$'s should introduce correlations between the different input bits of $f$ so that most random restrictions of $f \odot g^m$ \textit{look the same} in some appropriate sense.  
\end{itemize}

\bibliographystyle{alpha}
\bibliography{main}

\end{document}